\numberwithin{equation}{section}
\newtheorem{lem}{Lemma}[section]
\newtheorem{prop}{Proposition}[section]
\newcounter{hypA}
\newcounter{hypB}
\newcounter{hypD}
\date{}
\begin{document}

\begin{center}

{\Large \textbf{Bayesian Parameter Inference for Partially Observed Stochastic Volterra Equations}}

\vspace{0.5cm}

AJAY JASRA, HAMZA RUZAYQAT \& AMIN WU.

{\footnotesize 
Applied Mathematics \& Computational Science Program, \\
Computer, Electrical and Mathematical Sciences and Engineering Division, \\ King Abdullah University of Science and Technology, Thuwal, 23955, KSA.} \\
{\footnotesize E-Mail:\,} \texttt{{\footnotesize  ajay.workj.jasra100@gmail.com, \\ hamza.ruzayqat@kaust.edu.sa, amin.wu@kaust.edu.sa}} \\

\end{center}

\begin{abstract}
In this article we consider Bayesian parameter inference for a type of partially observed stochastic Volterra equation (SVE).
SVEs are found in many areas such as physics and mathematical finance. In the latter field they can be used
to represent long memory in unobserved volatility processes. In many cases of practical interest, SVEs must be time-discretized and then
parameter inference is based upon the posterior associated to this time-discretized process. Based upon recent studies on time-discretization
of SVEs (e.g.~\cite{richard}) we use Euler-Maruyama methods for the afore-mentioned discretization. We then show how multilevel Markov chain Monte
Carlo (MCMC) methods \cite{jasra} can be applied in this context. In the examples we study, we give a proof that shows that the cost to achieve a mean square error (MSE) of $\mathcal{O}(\epsilon^2)$, $\epsilon>0$, is {$\mathcal{O}(\epsilon^{-\tfrac{4}{2H+1}})$, where $H$ is the Hurst parameter. If one uses a single level MCMC method then the cost is 
$\mathcal{O}(\epsilon^{-\tfrac{2(2H+3)}{2H+1}})$} to achieve the same MSE. We illustrate these results in the context of state-space and stochastic volatility models, with the latter applied to real data.\\

\smallskip
\noindent\textbf{Keywords}: Volterra Equations, Discretization, Multilevel Monte Carlo, Markov Chain Monte Carlo, Stochastic Volatility
\\
\noindent \textbf{AMS subject classifications}: 62F15, 62M20, 60J10, 60J22, 65C40
\end{abstract}

\section{Introduction}
\label{sec:intro}

Stochastic Volterra equations (SVEs) are natural extension of the well-known deterministic Volterra equations and have been introduced in physics \cite{grip} and later on adapted in mathematical finance \cite{euch}. {In finance, these models describe the price process of a financial asset and are successfully used in equity markets to price
derivative options. These models take into account the effect of past returns
on the volatility. They are also capable of reproducing the correct form of the implied volatility skew at small times \cite{gatheral,euch}.} In this article we investigate the scenario that the SVE is partially observed, via discrete (in time) data and with unknown parameters. Given access to a fixed data set and a prior on the unknown parameters, we wish to estimate the parameters from the associated posterior distribution. This has several real applications, such as in econometrics via stochastic volatility models; see e.g.~\cite{bergomi, jasra_levy}. The longer memory that can be present in stochastic volatility models can also be represented by fractional Brownian motion (see e.g.~\cite{bayer0, gander}), but there are reasons for preferring SVEs; see e.g.~\cite{richard} for instance.

In many cases of practical interest the SVE can be written as a type of stochastic differential equation (SDE), whose solution exists and is unique (under assumptions), but which cannot be recovered analytically and often not simulated; see e.g.~\cite{richard} and the references therein. As a result there is great interest in time-discretizing the SDE and working with a biased, but convergent approximation scheme. There have been several recent works in this direction including \cite{bayer,richard} of which we follow the latter and consider the Euler-Maruyama (EM) method. Adopting EM and given a prior on the unknown parameters, gives rise to a posterior density on the parameters and the time-discretized path of the SVE, which is non-Markovian and has a bias due to time-discretization.

Performing fully Bayesian inference in the scenario that the partially observed process is Markovian, e.g.~conventional Brownian and L\'evy driven SDEs has already received a great deal of attention in the literature; see \cite{andrieu,delta,jasra,jasra_levy, hamza, pierre} for a non-exhaustive list. This typically depends on designing an appropriate Markov chain Monte Carlo (MCMC) scheme to sample from the posterior density of the parameters and latent process. In later work \cite{jasra} amongst others leveraged the multilevel Monte Carlo (MLMC) method \cite{giles,giles1,hein}, which works with a hierarchy of time discretizations and under appropriate mathematical assumptions can reduce the computational effort to achieve a pre-specified mean square error (MSE) versus using a single discretization and MCMC; see \cite{ml_rev} for a review in the context of data problems. \cite{jasra} combine MLMC and MCMC in the case of regular diffusion processes.

The work that combines MLMC with non-Markovian time discretized dynamics with MCMC and MLMC is far rarer and to the best of our knowledge we are only aware of \cite{jasra_fbm} in the context of fractional Brownian motion. The often, extra overhead in computation, versus Markovian processes, makes the application of cost reduction methods such as MLMC far more important. In this article we develop a MCMC method that combines with MLMC to facilitate fully Bayesian inference for a particular class of partially observed SVE models.
In particular, and under assumptions, we prove that the cost to achieve a MSE associated to posterior expectations of $\mathcal{O}(\epsilon^2)$, $\epsilon>0$, is $\mathcal{O}(\epsilon^{-\tfrac{20}{9}})$. If one uses a single level MCMC method then the cost is 
$\mathcal{O}(\epsilon^{-\tfrac{38}{9}})$ to achieve the same MSE. This is for a particular model for which we also give numerical simulations on both synthetic and real data.

This article is structured as follows. In Section \ref{sec:model} we describe the class of model that we consider as well as the associated time discretization. In Section \ref{sec:alg_anal} we give our algorithms for statistical inference as well as a mathematical analysis thereof. Section \ref{sec:num} provides our numerical results. Our mathematical proofs are housed in the appendix.

\section{Model}
\label{sec:model}

Let $(\Omega,\mathcal{F},\mathbb{P})$ be a probability space and define the following process on it:
\begin{eqnarray}
V_t & = & V_0 + \int_0^t K(t-s)\left\{\left(\kappa-\lambda V_s\right)ds + \nu \sqrt{V}_sdW_s\right\} \label{eq:vol}
\end{eqnarray}
where $\{W_t\}_{t\geq 0}$ is a one-dimensional Brownian motions, $K(t)=Ct^{H}$, {where $V_0>0$ is the starting point, $\lambda>0$ is the speed of the mean-reversion to the long-term level $\kappa>0$. The parameter $\nu>0$ is the volatility coefficient. $H \in [0,1/2)$ is the Hurst parameter that is used to measure the roughness of the path.} This model is very similar to the latent (volatility) process that is used in the rough Heston model e.g.~\cite{euch}, {which is a generalization of the renowned Heston model \cite{heston}}. It is explicitly assumed that there is a unique solution and we do not mention this point further.

We consider discrete time observations $Y_1,\dots,Y_T$ with some parametric density and conditionally independent given some characteristics of $\{V_t\}_{t\geq 0}$. We give two examples below. {We note that the parameter $H$ will be fixed for now, as it plays a crucial role in determining the model's behavior. In Section \ref{subsec:synth_data}, the objective of the simulations will be to numerically compare the rates between the single and multilevel approaches in the following two examples, namely, the state-space and the stochastic volatility models. Therefore, to ensure a proper convergence of the Markov chain, we fix the value of $H$. However, in Section \ref{subsec:real_data}, where real data is considered, we estimate the value of $H$ in addition to the rest of the parameters.}

\subsection{State-Space Model}\label{sec:ssm}

In this case we simply take $Y_t|V_t$ independent of all other random variables, with some parametric density $g(\cdot|v_t)$ and set $\theta:=(V_0,\kappa,\lambda,\nu)\in(\mathbb{R}^+)^4:=\Theta$. The objective is, given a proper prior $\pi$ on $\Theta$ to draw Bayesian inference in $\theta$ and (at least) the skeleton of volatilities $V_{1},\dots,V_{T}$:
$$
\pi(\theta,v_{1:T}|y_{1:T}) \propto \left(\prod_{t=1}^T g(y_t|v_t) p(v_t|v_{t-1},\theta)\right)\pi(\theta)
$$
with $\pi(\theta)$ the prior. 

\subsection{Stochastic Volatility Model}

Let $\{Y_t\}$ be a continuous-time process 
\begin{equation}
Y_t = Y_0 + \int_0^t rd s + \int_{0}^t\sqrt{V_s}d\left\{\rho W_t + \sqrt{1-\rho^2}B_t\right\}, \label{eq:log_price}
\end{equation}
where $\{B_t\}_{t\geq 0}$ is a one-dimensional standard Brownian motion independent of $\{W_t\}_{t\geq 0}$. Throughout we will take $Y_0=y_0$ as given.
In continuous time and conditional upon 
$$
\int_{0}^{1}\sqrt{V_s} dW_s,\dots, \int_{T-1}^{T}\sqrt{V_s} dW_s,\int_{0}^1 V_s ds,\dots,\int_{T-1}^T V_s ds,\theta
$$
we have that the conditional likelihood of $T$ discretely observed data at unit times is
$$
p\left(y_{1:T}\Big|\int_{0}^{1}\sqrt{V_s} dW_s,\dots, \int_{T-1}^{T}\sqrt{V_s} dW_s,\int_{0}^1 V_s ds,\dots,\int_{T-1}^T V_s ds,\theta\right) = 
$$
\begin{equation}\label{eq:exact_like_rh}
\prod_{t=1}^T \psi\left(y_t;y_{t-1} + r + \rho\int_{t-1}^t \sqrt{V_s} dW_s, (1-\rho^2)\int_{t-1}^t V_s ds\right)
\end{equation}
where $\psi(y;\mu,\sigma^2)$ is the one-dimensional Gaussian density of mean $\mu$ and variance $\sigma^2$. {In this context, $Y_t$ represents a random variable and $\{Y_t\}$ a continuous-time observation process, while $y_{1:T}$ refers to the actual observations received at unit times.} Setting
$\theta:=(V_0,\rho,\kappa,\lambda,\nu,r)\in\mathbb{R}^+\times[-1,1]\times(\mathbb{R}^+)^3\times\mathbb{R}:=\Theta$, one again can consider Bayesian inference with a proper prior on $\Theta$. {In Section \ref{sec:num}, we employ Gaussian prior distributions.}

\subsection{Discretization}\label{sec:disc}

We give the Euler-Maruyama discretization method. In the case of state-space models, the resulting inference is simple.
For the stochastic volatility (SV) model, more work is needed and detailed below.

\subsubsection{Euler Discretization}\label{sec:euler}

{Let $\Delta_l=2^{-l}$ be the stepsize.} 
We follow \cite{richard} and consider the Euler discretization of 
\eqref{eq:vol} as
\begin{eqnarray}
V_{(k+1)\Delta_l} & = & V_0 + \sum_{j=0}^k \Big\{K([k+1-j]\Delta_L)\left(\kappa-\lambda V_{j\Delta_l}\right)\Delta_l
+ K([k+1-j]\Delta_l)\nu\times\nonumber \\ & & \sqrt{|V_{j\Delta_l}|}\left(W_{(j+1)\Delta_l}-
W_{j\Delta_l}\right)\Big\}.\label{eq:vol_disc}
\end{eqnarray}
A more sophisticated discretization of the Volterra SDE in \eqref{eq:vol} is can be found in \cite{volt_disc}. We remark that the cost of implementing this discretization is at least $\mathcal{O}(\Delta_l^{-3})$, if exact Cholesky decompositions are used. It does not seem that (e.g.) circulant methods (see \cite{gauss}) can be used. \cite{volt_disc1} provide an approximation of the discretization that can reduce the computational cost, but as the cost of the application of Euler methods (which are `exact') is already $\mathcal{O}(\Delta_l^{-2})$ we only consider this case in this article.

\subsubsection{Stochastic Volatility Model}

Denote the increments of the Brownian motion $\{W_t\}_{t\geq 0}$ as $\overline{W}_{\Delta_l},\dots,\overline{W}_{T}$ and we remark that, independently for each $k\in\{1,\dots,T\Delta_l^{-1}\}$
$$
\overline{W}_{k\Delta_l} \sim \mathcal{N}(0,\Delta_l),
$$
where $\mathcal{N}(0,\Delta_l)$ denotes the one-dimensional Gaussian distribution of mean 0 and variance $\Delta_l$.

In practice one can replace the integrals in \eqref{eq:exact_like_rh} with a discretized process
$$
p^l\left(y_{1:T}\Big|\overline{w}_{\Delta_l:T},\theta\right) = 
$$
\begin{equation}\label{eq:disc_like}
\prod_{t=1}^T \psi\left(y_t;y_{t-1} + \rho \sum_{k=0}^{\Delta_l^{-1}} \sqrt{|v_{t-1+k\Delta_l}|} 
\overline{w}_{t-1+k\Delta_l}, (1-\rho^2)\Delta_l\sum_{k=0}^{\Delta_l^{-1}}|v_{t-1+k\Delta_l}|\right).
\end{equation}
for any discretized path $v_{\Delta_l:T}$. Thus, we have the posterior
\begin{align}\label{eq:post_euler}
\pi^l(\theta,\overline{w}_{\Delta_L:T}|y_{1:T}) &\propto 
p^l\left(y_{1:T}\Big|\overline{w}_{\Delta_l:T},\theta\right)
p(\overline{w}_{\Delta_l:T})\pi(\theta)
\end{align}
with $\pi(\theta)$ also denoting the prior, although it is different than in the state-space model case.

\section{Algorithms and Analysis}\label{sec:alg_anal}

In the following section all of our presentation and analysis focusses upon the discretization detailed in Section \ref{sec:euler} and in particular the posterior \eqref{eq:post_euler} associated to the likelihood \eqref{eq:disc_like}. However, extension of the algorithms to the case of the state-space model, is more-or-less standard and is omitted for brevity.

\subsection{Particle Filter and Delta-Particle Filter}

We begin by presenting the particle filter associated to the model in \eqref{eq:post_euler} except that $\theta$ is fixed. As a method for filtering, one would not expect it to work very well, but is an essential component in the subsequent algorithms. To simplify the notation we shall use the following shorthand for $(t,l)\in\{1,\dots,T\}\times\mathbb{N}_0$ given:
$$
\kappa_{t,l}(\overline{w}_{\Delta_l:t}) := \psi\left(y_t;y_{t-1} + \rho \sum_{k=0}^{\Delta_l^{-1}} \sqrt{|v_{t-1+k\Delta_l}|} 
\overline{w}_{t-1+k\Delta_l}, (1-\rho^2)\Delta_l\sum_{k=0}^{\Delta_l^{-1}}|v_{t-1+k\Delta_l}|\right).
$$
The particle filer is given in Algorithm \ref{alg:pf} as it is required in the sequel. 

\begin{algorithm}[h]
\caption{Particle Filter}
\label{alg:pf}
\begin{enumerate}
\item{Input: data $y_{1:T}$, level $l\in\mathbb{N}_0$, particle number $N\in\mathbb{N}$ and parameter $\theta\in\Theta$.}
\item{Initialize: For $i\in\{1,\dots,N\}$, independently generate $\overline{W}_{\Delta_l:1}^i$ from $\mathcal{N}(0,\Delta_l)$. Set $t=1$, {$\hat{p}^N(y_{1:0})=1$ for convention} and go to step 3.}
\item{Iterate: For $i\in\{1,\dots,N\}$ compute
$$
u_t^i = \frac{\kappa_{t,l}(\overline{w}_{\Delta_l:t}^i)}{\sum_{j=1}^N\kappa_{t,l}(\overline{w}_{\Delta_l:t}^j)}.
$$
Set $\hat{p}^N(y_{1:t})={\hat{p}^N(y_{1:t-1})}\tfrac{1}{N}\sum_{i=1}^N\kappa_{t,l}(\overline{w}_{\Delta_l:t}^i)$.
Then sample $\overline{w}_{\Delta_l:t}^{1:N}$ with replacement from  $\overline{w}_{\Delta_l:t}^{1:N}$ using probabilities
$u_t^{1:N}$. For $i\in\{1,\dots,N\}$, independently generate $\overline{W}_{t+\Delta_l:t+1}^i$ from $\mathcal{N}(0,\Delta_l)$.
Set $t=t+1$ and if $t=T$ go to step 4, otherwise restart step 3.}
\item{Grand Selection: For $i\in\{1,\dots,N\}$ compute
$
u_T^i = \frac{\kappa_{T,l}(\overline{w}_{\Delta_l:T}^i)}{\sum_{j=1}^N\kappa_{T,l}(\overline{w}_{\Delta_l:T}^j)}.
$
Set $\hat{p}^N(y_{1:T})=\hat{p}^N(y_{1:T-1})\tfrac{1}{N}\sum_{i=1}^N\kappa_{t,l}(\overline{w}_{\Delta_l:T}^i)$.
Sample one $\overline{w}_{\Delta_l:T}$ from $\overline{w}_{\Delta_l:T}^{1:N}$ using $u_T^{1:N}$ and go to step 5.}
\item{Output: trajectory $\overline{w}_{\Delta_l:T}$ and normalizing constant estimate $\hat{p}^N(y_{1:T})$.}
\end{enumerate}
\end{algorithm}

The delta-particle filter that was developed in \cite{jasra} (and renamed in \cite{delta, hamza}) is, on its own, not a particularly 
useful algorithm. However, in our context, it is vital for our multilevel MCMC algorithm to be presented below. The delta-particle filter is given in Algorithm \ref{alg:dpf}. {In the delta-particle filter, each particle is associated with two consecutive discretization levels. These levels are correlated in a specific way: the coarse-grained Brownian motion increment is the sum of adjacent fine-grained increments within the same time step. Unlike the particle filter, which updates the particle weights and the normalizing constant by calculating the likelihood between the observation and the particle state, the delta-particle filter constructs an approximate coupling of the likelihood density of the hidden state at two consecutive discretization levels.}


\begin{algorithm}[h]
\caption{Delta Particle Filter}
\label{alg:dpf}
\begin{enumerate}
\item{Input: data $y_{1:T}$, level $l\in\mathbb{N}$, particle number $N\in\mathbb{N}$ and parameter $\theta\in\Theta$.}
\item{Initialize: For $i\in\{1,\dots,N\}$, independently generate $\overline{W}_{\Delta_l:1}^{i,l}$ from $\mathcal{N}(0,\Delta_l)$. For $(i,k)\in\{1,\dots,N\}\times\{1,\dots,\Delta_{l-1}^{-1}\}$ set
$
\overline{W}_{k\Delta_{l-1}}^{i,l-1} = \overline{W}_{2k\Delta_{l}}^{i,l} + \overline{W}_{(2k-1)\Delta_{l}}^{i,l}.
$
Set $t=1$, $\hat{\tilde{p}}^N(y_{1:0})=1$ for convention and go to step 3.}
\item{Iterate: For $i\in\{1,\dots,N\}$ compute
$$
\tilde{u}_t^i = \frac{\max\{\kappa_{t,l}(\overline{w}_{\Delta_l:t}^{i,l}),\kappa_{t,l-1}(\overline{w}_{\Delta_{l-1}:t}^{i,l-1})\}}{\sum_{j=1}^N\max\{\kappa_{t,l}(\overline{w}_{\Delta_l:t}^{j,l}),\kappa_{t,l-1}(\overline{w}_{\Delta_{l-1}:t}^{j,l-1})\}}.
$$
Set $\hat{p}^N(y_{1:t})={\hat{p}^N(y_{1:t-1})}\tfrac{1}{N}\sum_{i=1}^N\max\{\kappa_{t,l}(\overline{w}_{\Delta_l:t}^{i,l}),\kappa_{t,l-1}(\overline{w}_{\Delta_{l-1}:t}^{i,l-1})\}$.
Then sample $(\overline{w}_{\Delta_l:t}^{1:N,l},\overline{w}_{\Delta_{l-1}:t}^{1:N,l-1})$ with replacement from  
$(\overline{w}_{\Delta_l:t}^{1:N,l},\overline{w}_{\Delta_{l-1}:t}^{1:N,l-1})$
using probabilities
$\tilde{u}_t^{1:N}$. For $i\in\{1,\dots,N\}$, independently generate $\overline{W}_{t+\Delta_l:t+1}^i$ from $\mathcal{N}(0,\Delta_l)$. For $(i,k)\in\{1,\dots,N\}\times\{1,\dots,\Delta_{l-1}^{-1}\}$ set
$
\overline{W}_{t+k\Delta_{l-1}}^{i,l-1} = \overline{W}_{t+2k\Delta_{l}}^{i,l} + \overline{W}_{t+(2k-1)\Delta_{l}}^{i,l}.
$
Set $t=t+1$ and if $t=T$ go to step 4, otherwise restart step 3.}
\item{Grand Selection: For $i\in\{1,\dots,N\}$ compute
$
\tilde{u}_T^i = \frac{\max\{\kappa_{T,l}(\overline{w}_{\Delta_l:T}^{i,l}),\kappa_{T,l-1}(\overline{w}_{\Delta_{l-1}:T}^{i,l-1})\}}{\sum_{j=1}^N\max\{\kappa_{T,l}(\overline{w}_{\Delta_l:T}^{j,l}),\kappa_{T,l-1}(\overline{w}_{\Delta_{l-1}:T}^{j,l-1})\}}.
$
Set $\hat{p}^N(y_{1:T})=\hat{p}^N(y_{1:T-1})\tfrac{1}{N}\sum_{i=1}^N
\max\{\kappa_{T,l}(\overline{w}_{\Delta_l:T}^{i,l}),\kappa_{T,l-1}(\overline{w}_{\Delta_{l-1}:T}^{i,l-1})\}$.
Sample one $(\overline{w}_{\Delta_l:T}^l, \overline{w}_{\Delta_{l-1}:T}^{l-1})$ from $(\overline{w}_{\Delta_l:T}^{1:N,l},\overline{w}_{\Delta_{l-1}:T}^{1:N,l-1})$ using $\tilde{u}_T^{1:N}$ and go to step 5.}
\item{Output: trajectories $(\overline{w}_{\Delta_l:T}^l,\overline{w}_{\Delta_{l-1}:T}^{l-1})$ and normalizing constant estimate $\hat{\tilde{p}}^N(y_{1:T})$.}
\end{enumerate}
\end{algorithm}

\subsection{Particle MCMC}

We now describe the first method that is needed for our approach. This is the basic particle MCMC (PMCMC) developed in \cite{andrieu} simply written in our notation and is presented in Algorithm \ref{alg:pmcmc}. The output from Algorithm \ref{alg:pmcmc} can be used to estimate expectations with respect to (w.r.t.)~$\pi^l$ as in \eqref{eq:post_euler}.

\begin{algorithm}[h]
\caption{Particle MCMC}
\label{alg:pmcmc}
\begin{enumerate}
\item{Input: data $y_{1:T}$, level $l\in\mathbb{N}_0$, particle number $N\in\mathbb{N}$, iteration number $M\in\mathbb{N}$ and proposal $q_l$.}
\item{Initialize: Sample $\theta_0^{l}$ from the prior and then run Algorithm \ref{alg:pf} with parameter $\theta_0^{l}$ to give $\overline{w}_{0,\Delta_l:T}^l$, denoting the normalizing constant estimate $\hat{p}_{\theta_0^{l}}^N(y_{1:T})$. Set $k=1$.}
\item{Iterate:
Sample $\theta'|\theta_{k-1}^{l}$ from the proposal $q_l(\cdot|\theta_{k-1}^{l})$ and then run Algorithm \ref{alg:pf} with parameter $\theta'$, denoting the normalizing constant estimate $\hat{p}_{\theta'}^N(y_{1:T})$ and the proposed path {$\overline{w}_{\theta',\Delta_l:T}^l$}.
Set $\theta_k^{l}=\theta'$, $\overline{w}_{k,\Delta_l:T}^l={\overline{w}_{\theta',\Delta_l:T}^l}$ and $\hat{p}_{\theta_k^{l}}^N(y_{1:T})=\hat{p}_{\theta'}^N(y_{1:T})$ with probability
$$
\min\left\{1,\frac{\hat{p}_{\theta'}^N(y_{1:T})\pi(\theta')q_l(\theta_k^{l}|\theta')}{\hat{p}_{\theta_k^{l}}^N(y_{1:T})\pi(\theta_k^{l})q_l(\theta'|\theta_k^{l})}\right\}
$$
otherwise set $\theta_k^{l}=\theta_{k-1}^l$, $\overline{w}_{k,\Delta_l:T}^l=\overline{w}_{k-1,\Delta_l:T}^l$ and $\hat{p}_{\theta_k^{l}}^N(y_{1:T})=\hat{p}_{\theta_{k-1}^l}^N(y_{1:T})$. Set $k=k+1$ and if $k=M+1$ go to step 4, otherwise go to the start to step 3.}
\item{Output: $(\theta_{0:M}^{l},\overline{w}_{0:M,\Delta_l:T}^l)$.}
\end{enumerate}
\end{algorithm}

\subsection{Multilevel MCMC and Estimator}

We are now in a position to give our final algorithm and estimator. We begin with a type of coupled MCMC which is presented in Algorithm \ref{alg:cmcmc}. It can be shown (e.g.~\cite{jasra}) that the samples from Algorithm \ref{alg:cmcmc} will correspond to that of an ergodic Markov chain of invariant measure
$$
\tilde{\pi}^{l,l-1}(d(\theta,\overline{w}_{\Delta_l:T}^l(l),\overline{w}_{\Delta_{l-1}:T}^{l-1}(l))) \propto
$$
$$
\left(\prod_{t=1}^T \max\{\kappa_{t,l}(\overline{w}_{\Delta_l:t}^l),\kappa_{t,l-1}(\overline{w}_{\Delta_{l-1}:t}^{l-1})\}\right)
\pi(\theta)d\theta\tilde{Q}^{l,l-1}(d(\overline{w}_{\Delta_l:T}^l(l),\overline{w}_{\Delta_{l-1}:T}^{l-1}(l)))
$$
where $\tilde{Q}^{l,l-1}(d(\overline{w}_{\Delta_l:T}^l(l),\overline{w}_{\Delta_{l-1}:T}^{l-1}(l)))$ is the (probability law) of the  synchronous coupling of the Brownian increments that is used in Algorithm \ref{alg:dpf} (step 2 and 3) and $d\theta$ is some $\sigma-$finite dominating measure on $(\Theta,\mathcal{B}(\Theta))$ such as Lebesgue. 
Now, define for $l\in\mathbb{N}$:
\begin{eqnarray*}
H_{1,l}(\overline{w}_{\Delta_l:T}^l,\overline{w}_{\Delta_{l-1}:T}^{l-1}) & := & \prod_{t=1}^T \frac{\kappa_{t,l}(\overline{w}_{\Delta_l:t}^l)}{\max\{\kappa_{t,l}(\overline{w}_{\Delta_l:t}^l),\kappa_{t,l-1}(\overline{w}_{\Delta_{l-1}:t}^{l-1})\}} \\
H_{2,l}(\overline{w}_{\Delta_l:T}^l,\overline{w}_{\Delta_{l-1}:T}^{l-1}) & := & \prod_{t=1}^T \frac{\kappa_{t,l-1}(\overline{w}_{\Delta_{l-1}:t}^{l-1})}{\max\{\kappa_{t,l}(\overline{w}_{\Delta_l:t}^l),\kappa_{t,l-1}(\overline{w}_{\Delta_{l-1}:t}^{l-1})\}}.
\end{eqnarray*}
Then for $\varphi:\Theta\times\mathbb{R}^{T}\rightarrow\mathbb{R}$, we have the identity:
\begin{equation}\label{eq:main_id}
\mathbb{E}_{\pi^l}[\varphi(\theta,V_{1:T})] - \mathbb{E}_{\pi^{l-1}}[\varphi(\theta,V_{1:T})] = 
\end{equation}
$$
\frac{\mathbb{E}_{\tilde{\pi}^{l,l-1}}[\varphi(\theta,V_{1:T}^l)H_{1,l}(\overline{W}_{\Delta_l:T}^l,\overline{W}_{\Delta_{l-1}:T}^{l-1})]}{\mathbb{E}_{\tilde{\pi}^{l,l-1}}[H_{1,l}(\overline{W}_{\Delta_l:T}^l,\overline{W}_{\Delta_{l-1}:T}^{l-1})]} - 
\frac{\mathbb{E}_{\tilde{\pi}^{l,l-1}}[\varphi(\theta,V_{1:T}^{l-1})H_{2,l}(\overline{W}_{\Delta_l:T}^l,\overline{W}_{\Delta_{l-1}:T}^{l-1})]}{\mathbb{E}_{\tilde{\pi}^{l,l-1}}[H_{2,l}(\overline{W}_{\Delta_l:T}^l,\overline{W}_{\Delta_{l-1}:T}^{l-1})]}
$$
where the subscript in the expectation operator denotes the probability law which the expectation is taken w.r.t.,~and 
$V_{1:T}^l,V_{1:T}^{l-1}$ in the functional $\varphi$ are understood as the discretized $V_{1:T}$ from \eqref{eq:vol_disc}  (at levels $l$ and $l-1$) which are functionals of the Brownian increments. We remark that we need not consider dependence only upon $V_{1:T}$ in the functional of interest, but it is convenient to do so.

Given the identity \eqref{eq:main_id} then one can use Algorithm \ref{alg:cmcmc} to approximate $\mathbb{E}_{\pi^l}[\varphi(\theta,V_{1:T})] - \mathbb{E}_{\pi^{l-1}}[\varphi(\theta,V_{1:T})]$ with
$$
\left\{\widehat{\mathbb{E}_{\pi^l}[\varphi(\theta,V_{1:T})] - \mathbb{E}_{\pi^{l-1}}[\varphi(\theta,V_{1:T})]}\right\}^M = 
$$
$$
\tfrac{\tfrac{1}{M+1}\sum_{k=0}^M\varphi(\theta_k^l,v_{k,1:T}^l(l))H_{1,l}(\theta_k^l,\overline{w}_{k,\Delta_l:T}^l(l),
\overline{w}_{k,\Delta_{l-1}:T}^{l-1}(l)
)}{\tfrac{1}{M+1}\sum_{k=0}^MH_{1,l}(\theta_k^l,\overline{w}_{k,\Delta_l:T}^l(l),
\overline{w}_{k,\Delta_{l-1}:T}^{l-1}(l)
)} -
\tfrac{\tfrac{1}{M+1}\sum_{k=0}^M\varphi(\theta_k^l,v_{k,1:T}^{l-1}(l))H_{2,l}(\theta_k^l,\overline{w}_{k,\Delta_l:T}^l(l),
\overline{w}_{k,\Delta_{l-1}:T}^{l-1}(l)
)}{\tfrac{1}{M+1}\sum_{k=0}^MH_{2,l}(\theta_k^l,\overline{w}_{k,\Delta_l:T}^l(l),
\overline{w}_{k,\Delta_{l-1}:T}^{l-1}(l)
)} 
$$
where $v_{k,1:T}^l(l)$ has been obtained by running the recursion \eqref{eq:vol_disc} with increments $\overline{w}_{k,\Delta_l:T}^l(l)$ at level $l$ and a similar notation for $v_{k,1:T}^{l-1}(l)$ has been obtained by running the recursion \eqref{eq:vol_disc} with increments $\overline{w}_{k,\Delta_{l-1}:T}^{l-1}(l)$ at level $l-1$.

Given the above exposition, we shall now describe our final method. We seek to approximate the multilevel identity
$$
\mathbb{E}_{\pi^L}[\varphi(\theta,V_{1:T})] = \mathbb{E}_{\pi^0}[\varphi(\theta,V_{1:T})] + \sum_{l=1}^L\{
\mathbb{E}_{\pi^l}[\varphi(\theta,V_{1:T})] - \mathbb{E}_{\pi^{l-1}}[\varphi(\theta,V_{1:T})]\}.
$$
where $L\in\mathbb{N}$ has to be specified ultimately, but we shall take it as given for now. Given $(M_0,\dots,M_L)\in\mathbb{N}^{L+1}$ one can then estimate $\mathbb{E}_{\pi^L}[\varphi(\theta,V_{1:T})]$ using the following approach.
\begin{enumerate}
\item{Run Algorithm \ref{alg:pmcmc} with $M_0$ iterations and at level $0$ returning the estimate:
$$
\left\{\widehat{\mathbb{E}_{\pi^0}[\varphi(\theta,V_{1:T})]}\right\}^{M_0} = \frac{1}{M_0+1}\sum_{k=0}^{M_0}\varphi(\theta_k^0,v_{k,1:T}^{0}).
$$
}
\item{For $l\in\{1,\dots,L\}$ independently and independently of step 1, run  Algorithm \ref{alg:cmcmc} with $M_l$ iterations and at level $l$ returning the estimate $\left\{\widehat{\mathbb{E}_{\pi^l}[\varphi(\theta,V_{1:T})] - \mathbb{E}_{\pi^{l-1}}[\varphi(\theta,V_{1:T})]}\right\}^{M_l}$.}
\end{enumerate}
Then one has the multilevel particle MCMC (MLPMCMC) approximation of $\mathbb{E}_{\pi^L}[\varphi(\theta,V_{1:T})]$:
\begin{equation}\label{eq:main_eq}
\left\{\widehat{\mathbb{E}_{\pi^L}[\varphi(\theta,V_{1:T})]}\right\}^{M_{0:L}} :=
\left\{\widehat{\mathbb{E}_{\pi^0}[\varphi(\theta,V_{1:T})]}\right\}^{M_0}+ \sum_{l=1}^L \left\{
\left\{\widehat{\mathbb{E}_{\pi^l}[\varphi(\theta,V_{1:T})] - \mathbb{E}_{\pi^{l-1}}[\varphi(\theta,V_{1:T})]}\right\}^{M_l}
\right\}.
\end{equation}
The main issue now is to choose $L$ and $M_{0:L}$.

\begin{algorithm}[h]
\caption{Coupled MCMC}
\label{alg:cmcmc}
\begin{enumerate}
\item{Input: data $y_{1:T}$, level $l\in\mathbb{N}$, particle number $N\in\mathbb{N}$, iteration number $M\in\mathbb{N}$ and proposal $q_l$.}
\item{Initialize: Sample $\theta_0^{l}$ from the prior and then run Algorithm \ref{alg:dpf} with parameter $\theta_0^{l}$ to give $(\overline{w}_{0,\Delta_l:T}^l(l),\overline{w}_{0,\Delta_{l-1}:T}^{l-1}(l))$ , denoting the normalizing constant estimate $\hat{\tilde{p}}_{\theta_0^{l}}^N(y_{1:T})$. Set $k=1$.}
\item{Iterate:
Sample $\theta'|\theta_{k-1}^{l}$ from the proposal $q_l(\cdot|\theta_{k-1}^{l})$ and then run Algorithm \ref{alg:dpf} with parameter $\theta'$, denoting the normalizing constant estimate $\hat{\tilde{p}}_{\theta'}^N(y_{1:T})$ and proposed paths $({\overline{w}_{\theta',\Delta_l:T}^l},\overline{w}_{\theta',\Delta_{l-1}:T}^{l-1})$.
Set $\theta_k^{l}=\theta'$, $(\overline{w}_{k,\Delta_l:T}^l(l),\overline{w}_{k,\Delta_{l-1}:T}^{l-1}(l))=({\overline{w}_{\theta',\Delta_l:T}^l},\overline{w}_{\theta',\Delta_{l-1}:T}^{l-1})$ and $\hat{\tilde{p}}_{\theta_k^{l}}^N(y_{1:T})=\hat{\tilde{p}}_{\theta'}^N(y_{1:T})$ with probability
$$
\min\left\{1,\frac{\hat{\tilde{p}}_{\theta'}^N(y_{1:T})\pi(\theta')q_l(\theta_k^{l}|\theta')}{\hat{\tilde{p}}_{\theta_k^{l}}^N(y_{1:T})\pi(\theta_k^{l})q_l(\theta'|\theta_k^{l})}\right\}
$$
otherwise set $\theta_k^{l}=\theta_{k-1}^l$, $(\overline{w}_{k,\Delta_l:T}^l(l),\overline{w}_{k,\Delta_{l-1}:T}^{l-1}(l))=(\overline{w}_{k-1,\Delta_l:T}^l(l),\overline{w}_{k-1,\Delta_{l-1}:T}^{l-1}(l))$ and $\hat{\tilde{p}}_{\theta_k^{l}}^N(y_{1:T})=\hat{\tilde{p}}_{\theta_{k-1}^l}^N(y_{1:T})$. Set $k=k+1$ and if $k=M+1$ go to step 4, otherwise go to the start to step 3.}
\item{Output: $(\theta_{0:M}^{l},\overline{w}_{0:M,\Delta_l:T}^l(l),\overline{w}_{0:M,\Delta_{l-1}:T}^{l-1}(l))$.}
\end{enumerate}
\end{algorithm}

\subsection{Mathematical Analysis}

We now give a mathematical analysis associated to our estimator \eqref{eq:main_eq} which will assist us in choosing $L$ and $M_{0:L}$. {In the discussion section following Proposition \ref{prop:main_res1}, we present the approach that we employ to choose $L$ and $M_{0:L}$.} The analysis for a state-space model (i.e.~as in Section \ref{sec:ssm}) using an Euler discretization is identical to that in \cite{jasra}; as a result we consider a simplified version of the SV model. Although this model is not exactly the one that we will use in our numerical results, the guidelines that we derive from our analysis should illuminate the general case.

The model considered is taken as for $t\in\{1,\dots,T\}$, $y_t\in\mathsf{Y}$
$$
p(y_t|\{v_s\}_{s\in[0,t]},y_{1:t-1},y_{t+1:T},\theta) = \overline{g}\left(y_t;y_{t-1},\int_{t-1}^t \overline{f}(v_s)ds,\int_{t-1}^t \overline{h}(v_s)dW_s\right)
$$
where $\overline{f},\overline{h}:\mathbb{R}\rightarrow\mathbb{R}$, $\{V_t\}_{t\in[0,T]}$ follows a generic Volterra SDE (such as the process \eqref{eq:vol}) and $\overline{g}$ is a probability density in $y_t$ with $y_0$ given. We will use an Euler discretization such as in \eqref{eq:vol_disc} and then consider the discretized likelihood model
$$
\overline{p}^l\left(y_{1:T}|\overline{w}_{\Delta_l:T},\theta\right) = 
\prod_{t=1}^T  \overline{g}\left(y_t;y_{t-1},\Delta_l\sum_{k=0}^{\Delta_l^{-1}-1} \overline{f}(v_{t-1+k\Delta_l}),\sum_{k=0}^{\Delta_l^{-1}-1} \overline{h}(v_{t-1+k\Delta_l})\overline{w}_{t-1+(k+1)\Delta_l}\right).
$$
We shall take $V_0$ fixed and known.

The Markov kernel associated to the iterate step (Step 3.) of Algorithm \ref{alg:cmcmc} is denoted as $K_l$ (appropriately modified to the model described above) and that these are defined on some measurable space $(\mathsf{W}_l,\mathcal{W}_l)$ which can be made precise (see e.g.~\cite{andrieu,jasra}).
We assume that our Markov chain(s) are started in stationarity; this point is considered (and explained) in \cite{jasra}. We make the following assumptions to facilitate our analysis.
Below $\mathcal{B}_b(\mathsf{X})$ (resp.~$\textrm{Lip}(\mathsf{X})$) are the collection of bounded and measurable (resp.~Lipschitz where we take the $L_1-$distance as the norm) real-valued functions on a measurable space $(\mathsf{X},\mathscr{X})$.

\newtheorem{assumption}{Assumption}
\begin{assumption}
\label{ass:1}
\begin{enumerate}
\item{$(\overline{f},\overline{g})\in(\textrm{Lip}(\mathbb{R})\cap\mathcal{B}_b(\mathbb{R}))^2$.}
\item{For any $(y,y')\in\mathsf{Y}^2$ there exists a constant $C<\infty$ such that for any $(x_{1:2},x_{1:2}')\in\mathbb{R}^4$:
$$
|\overline{g}(y;y',x_1,x_2)-\overline{g}(y;y',x_1',x_2')| \leq C\left(|x_1-x_1'| +|x_2-x_2'|\right).
$$
}
\item{For any $(y,y')\in\mathsf{Y}^2$ there exists constants $0<\underline{C}<\overline{C}<\infty$ such that for any $x_{1:2}\in\mathbb{R}^2$:
$$
\underline{C}\leq \overline{g}(y;y',x_1,x_2)\leq \overline{C}.
$$}
\end{enumerate}
\end{assumption}
To avoid confusion, in the below assumption $\{V_t\}_{t\in[0,T]}$ is a Volterra SDE process and $\{V_t^l\}_{t\in\{0,\Delta_l,\dots,T\}}$ is the Euler discretized process. We use $\mathbb{E}$ to denote the expectation operator associated to these processes.
\begin{assumption}
\label{ass:2}
For $r\in[1,2]$ there exists a $C<\infty$ such that for any $(l,t)\in\mathbb{N}_0\times\{\Delta_l,2\Delta_l,\dots,T\}$
$$
\sup_{\theta\in\Theta}\mathbb{E}[|V_t^l-V_t|^r]^{1/r} \leq C\Delta_l^{\beta}.
$$
\end{assumption}
Below we write the collection of probability measures on $(\mathsf{W}_l,\mathcal{W}_l)$ as $\mathscr{P}(\mathsf{W}_l)$.
\begin{assumption}
\label{ass:3}
There exist a $\xi\in(0,1)$ and a $\nu\in\mathscr{P}(W_l)$ such that, for each $w\in\mathsf{W}_l$, $l\in\mathbb{N}$,
$$
K_l(w,dw') \geq \xi\nu(dw').
$$
In addition $K_l$ is a $\eta_l-$reversible kernel where $\eta_l$ is the invariant measure of the kernel.
\end{assumption}

The above assumptions are fairly strong. Assumptions \ref{ass:2}-\ref{ass:3} are discussed (and verified) for state-space models driven by diffusions processes. Assumption \ref{ass:2} is investigated extensively in \cite{richard}.
Assumption \ref{ass:1} essentially states that the likelihood function is fairly regular and allows one to consider the convergence properties of the driving process.

We have the following result, whose proof is given in Appendix \ref{app:1}.  $\mathbb{E}$ is used to denote expectation w.r.t.~the probability associated to the random process used in our numerical approach.
\begin{prop}\label{prop:main_res}
Assume Assumptions \ref{ass:1}-\ref{ass:3}. Then for any $\varphi\in\textrm{\emph{Lip}}(\Theta\times\mathbb{R}^T)\cap\mathcal{B}_b(\Theta\times\mathbb{R}^T)$ there exists a $C<\infty$ such that for any $(l,M_l)\in\mathbb{N}^2$
$$
\mathbb{E}\left[\left(\left\{\widehat{\mathbb{E}_{\pi^l}[\varphi(\theta,V_{1:T})] - \mathbb{E}_{\pi^{l-1}}[\varphi(\theta,V_{1:T})]}\right\}^{M_l}-
\{\mathbb{E}_{\pi^l}[\varphi(\theta,V_{1:T})] - \mathbb{E}_{\pi^{l-1}}[\varphi(\theta,V_{1:T})]\}
\right)^2\right] \leq\frac{C\Delta_l^{2\beta}}{M_l}.
$$
\end{prop}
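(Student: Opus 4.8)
The plan is to bound the mean square error by decomposing it into a variance term and a bias (squared) term, and to show that both are $O(\Delta_l^{2\beta}/M_l)$. Write $\widehat{D}^{M_l}$ for the estimator $\{\widehat{\mathbb{E}_{\pi^l}[\varphi] - \mathbb{E}_{\pi^{l-1}}[\varphi]}\}^{M_l}$ and $D$ for the true difference $\mathbb{E}_{\pi^l}[\varphi(\theta,V_{1:T})] - \mathbb{E}_{\pi^{l-1}}[\varphi(\theta,V_{1:T})]$. The first step is to recall from the identity \eqref{eq:main_id} that $D$ is a difference of two ratios of expectations under $\tilde\pi^{l,l-1}$, and that $\widehat D^{M_l}$ is the natural empirical (ergodic-average) plug-in for those four expectations. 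I would introduce the shorthand: numerators $a_{1,l} := \mathbb{E}_{\tilde\pi^{l,l-1}}[\varphi(\theta,V_{1:T}^l)H_{1,l}]$, $a_{2,l}$ analogously with $H_{2,l}$ and $V_{1:T}^{l-1}$, and denominators $b_{j,l} := \mathbb{E}_{\tilde\pi^{l,l-1}}[H_{j,l}]$, with empirical counterparts $\widehat a_{j,l}^{M_l}, \widehat b_{j,l}^{M_l}$.

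The key observation that drives the $\Delta_l^{2\beta}$ rate is that the integrands $H_{1,l}$ and $H_{2,l}$ are \emph{close to each other and close to $1$} when the two discretization levels are close. Concretely, using Assumption \ref{ass:1}(2)--(3) (Lipschitz-ness and uniform upper/lower bounds on $\overline g$), together with the telescoping product structure of $H_{1,l}$ and $H_{2,l}$ and the fact that $|\max\{a,b\} - a| \le |a-b|$, one shows $|H_{1,l} - H_{2,l}| \le C\sum_{t=1}^T |\kappa_{t,l}(\cdot) - \kappa_{t,l-1}(\cdot)|$ pointwise, and that this latter quantity is controlled by $\sum_{t} \Delta_l \sum_k |\overline f(v^l_{\cdot}) - \overline f(v^{l-1}_{\cdot})| + \sum_k |\overline h(v^l_\cdot) - \overline h(v^{l-1}_\cdot)|\,|\overline w_\cdot|$ — which, after taking $L_2$-norms under $\tilde\pi^{l,l-1}$ and invoking Assumption \ref{ass:2} (strong error rate $\beta$ for the Euler scheme, applied to the coupled pair), yields $\mathbb{E}_{\tilde\pi^{l,l-1}}[(H_{1,l}-H_{2,l})^2]^{1/2} \le C\Delta_l^\beta$ and likewise $\mathbb{E}_{\tilde\pi^{l,l-1}}[(\varphi(\theta,V^l)H_{1,l} - \varphi(\theta,V^{l-1})H_{2,l})^2]^{1/2} \le C\Delta_l^\beta$ using the Lipschitz/bounded assumption on $\varphi$. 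This also bounds $|a_{1,l}-a_{2,l}|$ and $|b_{1,l}-b_{2,l}|$ by $C\Delta_l^\beta$, and $b_{j,l}$ is bounded above and below by positive constants (from Assumption \ref{ass:1}(3)), so the ratios are well-defined and the denominators are uniformly non-degenerate.

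Next I would carry out the standard ratio-estimator error analysis: writing $\widehat a/\widehat b - a/b = (\widehat a - a)/b - (a/b)(\widehat b - b)/b + (\text{higher order})$, and controlling the remainder via the lower bound on $\widehat b$ (which holds up to an event of small probability, or uniformly if the lower bound on $\overline g$ forces $H_{j,l}$ and hence $\widehat b_{j,l}$ to be bounded below deterministically — the latter is cleanest here). The two ratios in $D$ are differenced, so the leading linear terms combine into quantities of the form $\widehat a_{1,l} - \widehat a_{2,l} - (a_{1,l}-a_{2,l})$ etc., i.e.\ ergodic-average fluctuations of the \emph{difference} integrand, whose variance is $O(\Delta_l^{2\beta}/M_l)$: here I invoke Assumption \ref{ass:3} — the one-step minorization $K_l(w,\cdot)\ge\xi\nu(\cdot)$ gives uniform (in $l$) geometric ergodicity, hence a Markov-chain variance bound of the form $\mathrm{Var}(\tfrac1{M_l+1}\sum_k \Psi(X_k)) \le \tfrac{C}{M_l}\,\mathrm{Var}_{\eta_l}(\Psi) \le \tfrac{C}{M_l}\,\mathbb{E}_{\eta_l}[\Psi^2]$ with $C$ independent of $l$, applied to $\Psi$ the centered difference integrands whose second moments we just bounded by $C\Delta_l^{2\beta}$. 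Reversibility in Assumption \ref{ass:3} is what makes this MCMC variance bound clean. Assembling: bias of $\widehat D^{M_l}$ for $D$ comes only from the nonlinearity of the ratio and is $O(\Delta_l^{2\beta}/M_l)$ (so its square is negligible), and the variance is $O(\Delta_l^{2\beta}/M_l)$, giving the claimed MSE bound.

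The main obstacle is the third step — propagating everything through the two ratios and their difference while keeping all constants uniform in $l$. The delicate points are: (i) justifying that $\widehat b_{j,l}^{M_l}$ is bounded away from $0$ uniformly (handled by the deterministic lower bound on $\overline g$, which forces $\underline C^{\,T} \le H_{j,l}/\text{(product of maxes)}$... actually $H_{j,l}\le 1$ but is bounded below by $(\underline C/\overline C)^T$, a fixed positive constant, so $\widehat b_{j,l}^{M_l} \ge (\underline C/\overline C)^T > 0$ deterministically — this is the crucial use of boundedness); (ii) ensuring the "higher order" remainder in the ratio expansion is genuinely $O(\Delta_l^{2\beta}/M_l)$ and not merely $O(1/M_l)$ — this requires that the fluctuation terms carry a factor $\Delta_l^\beta$, which in turn relies on writing the remainder in terms of the \emph{differences} $\widehat a_{1,l} - \widehat a_{2,l}$ and $\widehat b_{1,l} - \widehat b_{2,l}$ (both $O(\Delta_l^\beta)$ in size) rather than the individual terms; and (iii) verifying the coupled Euler pair satisfies a joint strong-error estimate of order $\Delta_l^\beta$ as an input to Assumption \ref{ass:2}, which is exactly the regime studied in \cite{richard}. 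I would follow the ratio-estimator machinery of \cite{jasra} closely, since Assumptions \ref{ass:1}--\ref{ass:3} are tailored to make that argument go through verbatim, with the Volterra-specific content entering only through Assumption \ref{ass:2}.
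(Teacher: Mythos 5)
Your overall architecture is the same as the paper's: first establish that the coupled integrands $\varphi(\theta,V_{1:T}^l)H_{1,l}$ and $\varphi(\theta,V_{1:T}^{l-1})H_{2,l}$ are $\mathcal{O}(\Delta_l^{\beta})$-close in $L_1$ and $L_2$ under $\tilde{\pi}^{l,l-1}$ (this is Lemma \ref{lem:lem2}, itself reduced to the discretization estimate of Lemma \ref{lem:lem1}), and then feed this into the ratio-estimator analysis for a reversible chain satisfying the one-step minorization, which the paper imports wholesale from \cite{jasra} (their Proposition A.2) rather than re-deriving. Your observations that $H_{j,l}\ge(\underline{C}/\overline{C})^{T}$ deterministically, so the empirical denominators are uniformly non-degenerate, and that the remainder terms must be organized around the \emph{differences} $\widehat{a}_{1,l}-\widehat{a}_{2,l}$, $\widehat{b}_{1,l}-\widehat{b}_{2,l}$ (each carrying a factor $\Delta_l^{\beta}$) are exactly how that machinery works, so steps (i)--(ii) of your plan are faithful to the paper.

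There is, however, one step that fails as written, and it is precisely the delicate point. In your closeness estimate you bound $|\kappa_{t,l}-\kappa_{t,l-1}|$ pointwise by a Riemann-sum term plus $\sum_{k}|\overline{h}(v^{l}_{\cdot})-\overline{h}(v^{l-1}_{\cdot})|\,|\overline{w}_{\cdot}|$ and then take $L_2$ norms. Putting the absolute value inside the sum destroys the martingale (orthogonality) structure: there are $\Delta_l^{-1}$ summands, each of $L_2$ size roughly $\Delta_l^{\beta+1/2}$, and without cancellation of the cross terms you only obtain $\mathcal{O}(\Delta_l^{\beta-1/2})=\mathcal{O}(\Delta_l^{H})$, not $\mathcal{O}(\Delta_l^{\beta})=\mathcal{O}(\Delta_l^{H+1/2})$; with that loss the claimed rate $\Delta_l^{2\beta}/M_l$ does not follow. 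The paper's Lemma \ref{lem:lem1} keeps the signed sum $\sum_k\{\overline{h}(V^{l}_{\cdot})-\overline{h}(V_{\cdot})\}\overline{W}_{\cdot}$, notes it is a martingale and applies the Burkholder--Davis--Gundy inequality (with Minkowski and Assumption \ref{ass:2}) to recover the extra $\Delta_l^{1/2}$; you need that argument (or an It\^o-isometry computation), applied after comparing each level to the continuous-time integrand. Doing it via the continuous process also resolves your concern (iii): no separate ``joint strong-error'' input is needed, since under the synchronous coupling both discretizations share the Brownian motion of $V$, so the level-to-level bound is Assumption \ref{ass:2} plus the triangle inequality, and the grid mismatch is handled by the coarse increment being the sum of two fine increments. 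A minor point: your claimed bias $\mathcal{O}(\Delta_l^{2\beta}/M_l)$ is stronger than what the paper proves (Proposition \ref{prop:main_res1} gives $\mathcal{O}(\Delta_l^{\beta}/M_l)$), but this is harmless here since only $\textrm{bias}^2\le C\Delta_l^{2\beta}/M_l$ is required.
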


The proof of the following result can be found in Appendix \ref{app:2}.
\begin{prop}\label{prop:main_res1}
Assume Assumptions \ref{ass:1}-\ref{ass:3}. Then for any $\varphi\in\textrm{\emph{Lip}}(\Theta\times\mathbb{R}^T)\cap\mathcal{B}_b(\Theta\times\mathbb{R}^T)$ there exists a $C<\infty$ such that for any $(l,M_l)\in\mathbb{N}^2$
$$
\Bigg|\mathbb{E}\left[\left\{\widehat{\mathbb{E}_{\pi^l}[\varphi(\theta,V_{1:T})] - \mathbb{E}_{\pi^{l-1}}[\varphi(\theta,V_{1:T})]}\right\}^{M_l}-
\{\mathbb{E}_{\pi^l}[\varphi(\theta,V_{1:T})] - \mathbb{E}_{\pi^{l-1}}[\varphi(\theta,V_{1:T})]\}
\right]\Bigg| \leq \frac{C\Delta_l^\beta}{M_l}.
$$
\end{prop}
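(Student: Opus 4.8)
The plan is to control the bias of the ratio-of-averages estimator $\{\widehat{\cdots}\}^{M_l}$ by decomposing it into a "stationary-estimator bias" (what would be obtained from exact expectations under $\tilde\pi^{l,l-1}$, which is exactly the quantity in \eqref{eq:main_id}, so this part contributes zero) plus the bias incurred by replacing each of the four expectations $\mathbb{E}_{\tilde\pi^{l,l-1}}[\cdot]$ by its empirical average over the coupled MCMC chain of length $M_l+1$. Write $A_l := \mathbb{E}_{\tilde\pi^{l,l-1}}[\varphi(\theta,V_{1:T}^l)H_{1,l}]$, $B_l := \mathbb{E}_{\tilde\pi^{l,l-1}}[H_{1,l}]$, with empirical counterparts $\widehat A_l,\widehat B_l$, and similarly $C_l,D_l,\widehat C_l,\widehat D_l$ for the second term. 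Then $\{\widehat{\cdots}\}^{M_l} = \widehat A_l/\widehat B_l - \widehat C_l/\widehat D_l$ and the target is $A_l/B_l - C_l/D_l$. So I need to bound $\big|\mathbb{E}[\widehat A_l/\widehat B_l] - A_l/B_l\big|$ (and the analogous $C,D$ term) by $C\Delta_l^\beta/M_l$.

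The key steps, in order. First I would establish, exactly as in the proof of Proposition \ref{prop:main_res}, the first- and second-moment estimates on the centred empirical averages: under Assumptions \ref{ass:1}--\ref{ass:3} (the minorization in \ref{ass:3} gives uniform-in-$l$ geometric ergodicity, hence a $1/M_l$ variance bound with constant independent of $l$, and the Lipschitz/boundedness in \ref{ass:1} together with Assumption \ref{ass:2} give that $H_{1,l}-1$ and $\varphi(\theta,V^l_{1:T})H_{1,l} - \varphi\cdot 1$ are $O(\Delta_l^\beta)$ in $L_2(\tilde\pi^{l,l-1})$). Concretely, $\mathbb{E}[(\widehat B_l - B_l)^2]\le C\Delta_l^{2\beta}/M_l$ and the same for $\widehat A_l$ relative to its limit, and crucially $\widehat B_l, B_l$ are bounded below by a strictly positive constant uniformly in $l$ (since $H_{1,l}\in[0,1]$ and, using Assumption \ref{ass:1}(3), $B_l\ge \underline C^T/\overline C^T>0$, while a truncation/boundedness argument handles $\widehat B_l$). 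Second, I would perform the standard ratio Taylor expansion:
\[
\frac{\widehat A_l}{\widehat B_l} - \frac{A_l}{B_l} = \frac{\widehat A_l - A_l}{B_l} - \frac{A_l(\widehat B_l - B_l)}{B_l^2} + R_l,
\]
where the remainder $R_l$ is quadratic in the errors $(\widehat A_l - A_l, \widehat B_l - B_l)$ divided by $B_l^2\widehat B_l$ (or a product of first-order errors). Taking expectations, the two linear terms are tricky because $\mathbb{E}[\widehat A_l - A_l]$ and $\mathbb{E}[\widehat B_l - B_l]$ are not zero in general (the stationary-start assumption makes the empirical average of each individual functional unbiased for its $\tilde\pi^{l,l-1}$-expectation, so in fact these linear terms vanish in expectation — this is exactly why starting in stationarity is assumed). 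Then $\mathbb{E}[R_l]$ is bounded, via Cauchy--Schwarz, by $\big(\mathbb{E}[(\widehat A_l-A_l)^2]\big)^{1/2}\big(\mathbb{E}[(\widehat B_l-B_l)^2]\big)^{1/2}$ times a constant (from the uniform lower bounds on $B_l,\widehat B_l$), which is $\le C\,(\Delta_l^\beta M_l^{-1/2})(\Delta_l^\beta M_l^{-1/2}) = C\Delta_l^{2\beta}/M_l$ — in fact even better than required. The subtlety is that the numerator error $\widehat A_l - A_l$ is only $O(M_l^{-1/2})$ in $L_2$ on its own (not $O(\Delta_l^\beta M_l^{-1/2})$), so one of the two factors must be the $B$-type error, which carries the full $\Delta_l^\beta$; pairing $\widehat A_l - A_l$ (size $M_l^{-1/2}$, no $\Delta_l^\beta$) with $\widehat B_l - B_l$ (size $\Delta_l^\beta M_l^{-1/2}$) still gives the $\Delta_l^\beta/M_l$ rate. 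The same computation applied to $\widehat C_l/\widehat D_l$ — note $C_l,\widehat C_l$ themselves are $O(1)$ but the combination $A_l/B_l - C_l/D_l$ is what has a clean identity; here I exploit that $\widehat A_l - \widehat C_l$ telescopes against $\varphi(\theta,V^l)H_{1,l}-\varphi(\theta,V^{l-1})H_{2,l}$, which is $O(\Delta_l^\beta)$ in $L_2$ by Assumption \ref{ass:2} and the Lipschitz property of $\varphi$ — but for the bias bound the crude pairing above already suffices.

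The main obstacle I anticipate is the careful bookkeeping of the lower bound on $\widehat B_l$ (and $\widehat D_l$) inside the remainder term: $\widehat B_l$ can in principle be small on a low-probability event, so $\mathbb{E}[R_l]$ requires either a high-probability argument (showing $\mathbb{P}(\widehat B_l < B_l/2)$ is exponentially small in $M_l$ via a concentration bound obtained from geometric ergodicity, then bounding $R_l$ crudely on the complementary event) or a uniform integrability argument using higher moments of $1/\widehat B_l$. This is precisely the technical heart of the companion Proposition \ref{prop:main_res} and I would reuse that machinery verbatim; the only additional ingredient here, versus the $L_2$ statement, is that the linear (in the centred averages) terms have exactly zero expectation under the stationary-start convention, which converts the $M_l^{-1/2}$ rate of Proposition \ref{prop:main_res} into the $M_l^{-1}$ rate claimed here. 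I would close by noting the constant $C$ depends on $T$, on $\underline C,\overline C$, on the Lipschitz constants in Assumption \ref{ass:1}, on $\xi$ from Assumption \ref{ass:3}, and on $\|\varphi\|_\infty$, but not on $l$ or $M_l$.
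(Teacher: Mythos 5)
Your proposal is correct, but it takes a somewhat different route from the paper. The paper does not Taylor-expand each ratio separately: it invokes the six-term exact decomposition of a difference of ratio estimators (citing \cite[Lemma C.5]{mlpf}), writes the resulting products so that, by the stationary start, each expectation can be centred (e.g.\ $1/A^{M_l}$ is replaced by $1/A^{M_l}-1/A$), and then applies Cauchy--Schwarz: one factor is an ergodic-average fluctuation of size $\mathcal{O}(M_l^{-1/2})$, while the factor carrying $\Delta_l^{\beta}$ is either a coupled difference such as $a^{M_l}-b^{M_l}-(a-b)$ (size $\mathcal{O}(\Delta_l^{\beta}M_l^{-1/2})$) or a deterministic difference $a-b$, $A-B$ (size $\mathcal{O}(\Delta_l^{\beta})$), all controlled via Lemma \ref{lem:lem2}. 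In your scheme the $\Delta_l^{\beta}$ smallness is instead carried by the denominators: you need $\mathrm{Var}_{\tilde{\pi}^{l,l-1}}(H_{1,l})=\mathcal{O}(\Delta_l^{2\beta})$ (and likewise for $H_{2,l}$), which is not literally one of the paper's lemmas but follows by the same argument as Lemma \ref{lem:lem1} applied at levels $l$ and $l-1$ under the synchronous coupling, together with the bounded change of measure from Assumption \ref{ass:1}(3); with that, your pairing $(\widehat A_l-A_l)$ of size $M_l^{-1/2}$ against $(\widehat B_l-B_l)$ of size $\Delta_l^{\beta}M_l^{-1/2}$ indeed yields the $\Delta_l^{\beta}/M_l$ rate, and it does so ratio by ratio, so you never need the joint coupling of numerators that Lemma \ref{lem:lem2} provides. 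Your approach is thus arguably more elementary, at the price of one extra variance lemma; the paper's approach reuses its Lemma \ref{lem:lem2} and the generic decomposition wholesale. One remark: the ``main obstacle'' you anticipate is moot --- by Assumption \ref{ass:1}(3) every term $H_{1,l}\geq(\underline{C}/\overline{C})^{T}$ pointwise, so $\widehat B_l$ (and $\widehat D_l$) is deterministically bounded below uniformly in $l$ and no concentration, truncation or higher-moment argument for $1/\widehat B_l$ is needed.
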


The implications of these results are as follows. We have that
\begin{equation}\label{eq:mse_bound}
\mathbb{E}\left[\left(\left\{\widehat{\mathbb{E}_{\pi^L}[\varphi(\theta,V_{1:T})]}\right\}^{M_{0:L}} - 
\mathbb{E}_{\pi^L}[\varphi(\theta,V_{1:T})]\right)^2\right] 
\leq C\left(\sum_{l=0}^L \frac{\Delta_l^{2\beta}}{M_l}+
\sum_{l=0}^L\sum_{q=0, q\neq l}^L \frac{\Delta_l^{\beta}}{M_l}\frac{\Delta_q^{\beta}}{M_q}
\right).
\end{equation}
The cost of simulation is of $\mathcal{O}(\sum_{l=0}^L \Delta_l^{-2}M_l)$. 
A bound on the MSE is the sum of the square bias and the upper bound in \eqref{eq:mse_bound}.
The bias (weak error), to our knowledge is not known exactly (see \cite{bayer} for some related results however), but one can bound the square weak error by the strong error (which is $2\beta$ under (A\ref{ass:2})) by Jensen's inequality. Then it is well-known (e.g.~\cite{giles,giles1}) how to choose $L,M_{0:L}$. In the context of Volterra SDEs, the results of 
\cite{richard} suggest that $\beta=H+\tfrac{1}{2}$ in our setting and under the assumptions of that paper. Therefore, for $H\in(0,\tfrac{1}{2})$, one can choose $M_l=\mathcal{O}(\epsilon^{-2}\Delta_l^{\tfrac{2H+3}{2}}\Delta_L^{H-\tfrac{1}{2}})$, where $\epsilon>0$ given and $\Delta_L^{2H+1}=\mathcal{O}(\epsilon^2)$ and thus the MSE is $\mathcal{O}(\epsilon^2)$ for a cost of $\mathcal{O}(\epsilon^{-\tfrac{4}{2H+1}})$. In the case of a single level (i.e.~Algorithm \ref{alg:pmcmc}) the cost to achieve the same MSE is
$\mathcal{O}(\epsilon^{-\tfrac{2(2H+3)}{2H+1}})$ which is typically much higher. In our simulations we use $H=0.4$ which indicates the cost of the multilevel method is almost optimal at $\mathcal{O}(\epsilon^{-\tfrac{20}{9}})$, whereas it is $\mathcal{O}(\epsilon^{-\tfrac{38}{9}})$ for the single level case.


\section{Numerical Results}
\label{sec:num} 

{In the following numerical analysis, the strategy for choosing $L$ and $M_{0:L}$ involves setting a predetermined threshold $\epsilon^2$ for the MSE and subsequently setting $L= \mathcal{O}\left(-2\log(\epsilon)/(2H+1)\right)$ and $M_l=\mathcal{O}(\epsilon^{-2}\Delta_l^{\tfrac{2H+3}{2}}\Delta_L^{H-\tfrac{1}{2}})$, for $l \in \{0,\cdots,L\}$. }

\subsection{Prior Settings}

For the priors, in the context of all simulations that we performed, we consider only $H=0.4$. One can recall that $K(t)=Ct^{H}$ and we set $C=0.7$ in all of our simulations. For the state-space model, all unknown parameters are taken as independent and
on the log scale $\mathcal{N}(0,1)$ random variables. For the SV model again all parameters are independent and on the log scale all parameters, apart from $\rho,r$, are 
$\mathcal{N}(0,1)$. For $\rho$ we assume that, \emph{apriori}:
$$
\log\left(\frac{1+\rho}{1-\rho}\right)\sim\mathcal{N}(0,1).
$$
We take $r\sim\mathcal{N}(0,1)$.

\subsection{Synthetic Data}
\label{subsec:synth_data}
We generate 100 data points from the state-space model (with time-discretized dynamics), when the data are taken as Gaussian with mean as the hidden state and variance $0.8^2$. We also generate 100 data points from the SV model. The discretization is taken at {level $l = 6$} in both cases. For the MCMC moves we use Gaussian random walk proposals with diagonal covariance matrix and tune the MCMC before presenting the results. The samples for single level and multilevel are set using the guidelines in the previous section and we consider levels of discretization up-to {level $l = 5$}.

In the context of the state-space model with synthetic data using single level and multilevel MCMC, we can see the performance of the sampler in Figure \ref{fig:pmmh_SS}, which shows that for the $10^4$ samples presented a generally good mixing of the samplers; note that the targets are not same as one has to correct by importance sampling in the case of multilevel MCMC. The plots for the case of the SV model in this case are also similar, except with extra parameters.

We now investigate the relationship between cost and MSE as conjectured in the previous section. This is considered in Tables \ref{tab:1}-\ref{tab:2} which estimate the rate of decrease of the MSE as the cost increases on the log scale, which should be around $-\tfrac{20}{18}$ for the multilevel method and $-\tfrac{38}{18}$ for the single level method, in terms of each parameter. The results in the Table are slightly better than expected, but indicate that these rates are reasonable.

\begin{table}[h!]
    \centering
    \caption{State-Space Model: Estimated Log Cost against Log MSE. This is for synthetic data.}
    \begin{tabular}{c|c|c}
    Parameter & PMCMC & MLPMCMC \\
    \hline
    $\log(V_0)$ & -1.35 & -1.05 \\
    $\log(\kappa)$ & -1.43 & -1.07\\
    $\log(\lambda)$ &-1.55 & -1.08\\
    $\log(\nu)$ &-1.57 & -1.04
    \end{tabular}
    \label{tab:1}
\end{table}

\begin{table}[h!]
    \centering
    \caption{Stochastic Volatility Model: Estimated Log Cost against Log MSE. This is for synthetic data.}
    \begin{tabular}{c|c|c}
    Parameter & PMCMC & MLPMCMC \\
    \hline
    $\log(V_0)$ & -1.52 & -1.18 \\
    $\log(\tfrac{1+\rho}{1-\rho})$ & -1.64 & -1.07 \\
    $\log(\kappa)$ & -1.45 & -1.10\\
    $\log(\lambda)$ &-1.56 & -1.20\\
    $\log(\nu)$ &-1.56 & -1.26\\
    $r$ &-1.50 & -1.09\\
    \end{tabular}
    \label{tab:2}
\end{table}

\begin{figure}[h!]
    \centering
    \includegraphics[scale=0.478, trim={2.5cm 3cm 2.5cm 3cm},clip]{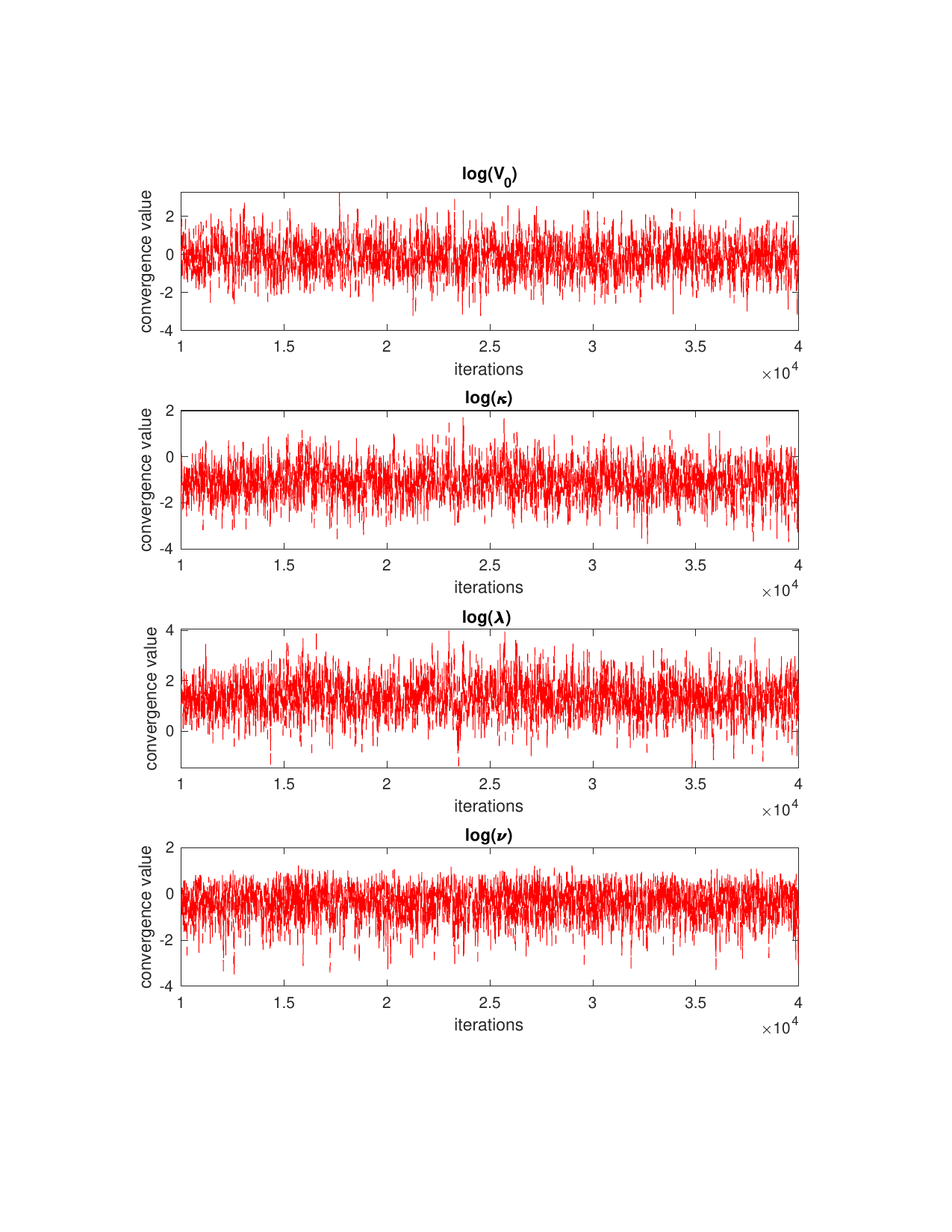}
    \includegraphics[scale=0.478, trim={2.5cm 3cm 2.5cm 3cm},clip]{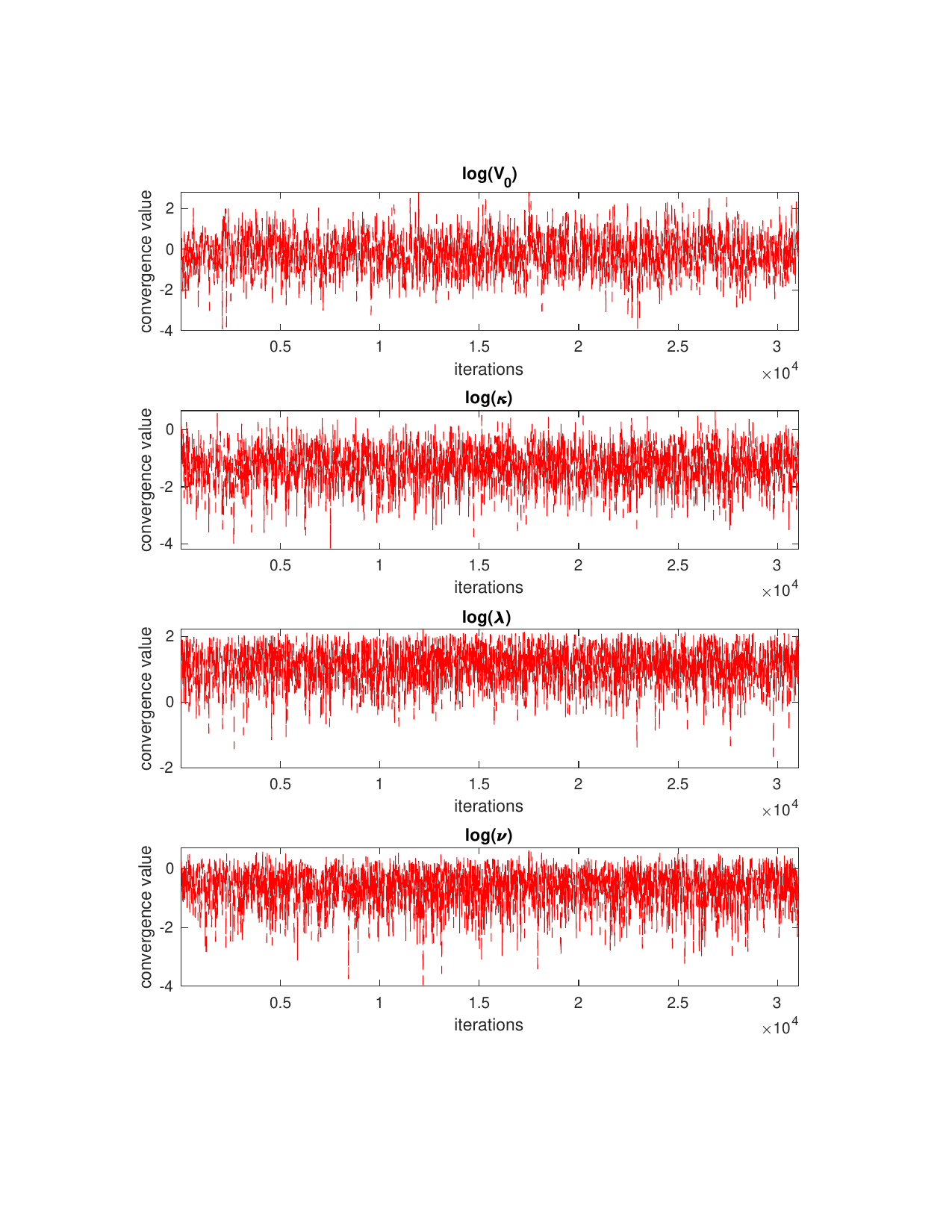}
    \caption{State-Space model for synthetic data: Convergence plot using Particle MCMC (left) and the (marginal) Multilevel Sampler at {level $l = 5$} (right).}
    \label{fig:pmmh_SS}
\end{figure}

\subsection{Real Data}
\label{subsec:real_data}
We consider the daily closing price of the Tadawul All Share Index. The data are recorded from 20 February 2022 to 21 February 2023, which consists of 249 data points and the daily log-returns are plotted on the left of Figure \ref{fig:dat_corr}. We also plot the correlation between the absolute log returns and the lagged returns on the right of Figure \ref{fig:dat_corr}. The plot illustrates the leverage effect, that is, using the absolute log returns as a proxy volatility, that the future volatility is negatively correlated with the log-returns, but the opposite relationship does not hold. We will investigate whether the multilevel method can recover some of the information in the latter plot, by estimating the same quantity from the posterior predictive.

We use our multilevel method, with lowest level as 3 and largest level of 5 and some estimates from the predictive distribution, when compared with the real data can be found in Table \ref{tab:3}. Although the statistics are not exactly the same, it does indicate that the model can represent some features of this data. Turning to Figure \ref{fig:corr_pred} we can see the predictive estimate of the correlation between the absolute log returns and the lagged returns. The structure of this plot does not appear to recover the leverage effect that was prevalent in the real data. At least in the example of this data, the long-term memory of the volatility does not seem to improve on existing studies with more conventional volatility processes (e g.~\cite{jasra_levy}). In Table~\ref{tab:param_est}, we present the parameters mean values obtained using the MLPMCMC method.

\begin{figure}[h!]
    \centering
    \includegraphics[scale=0.55]{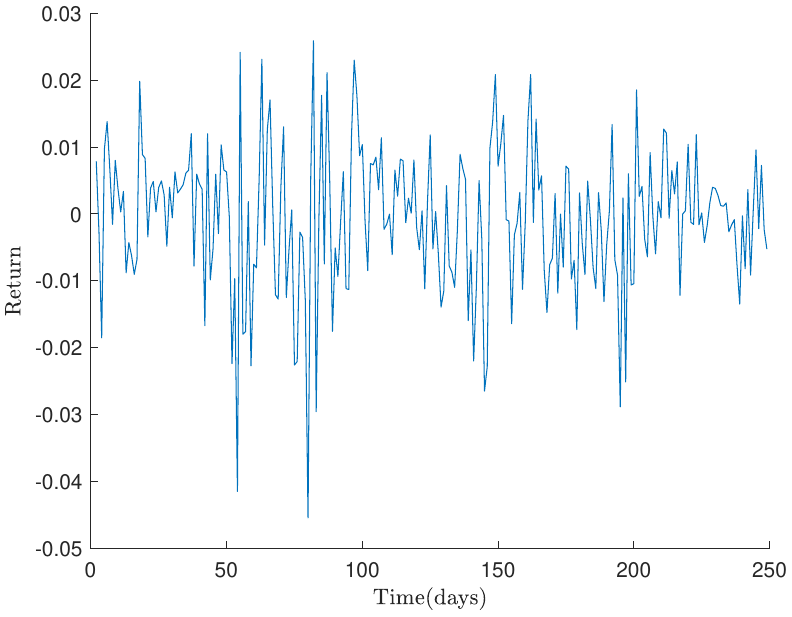}\quad
    \includegraphics[scale=0.55]{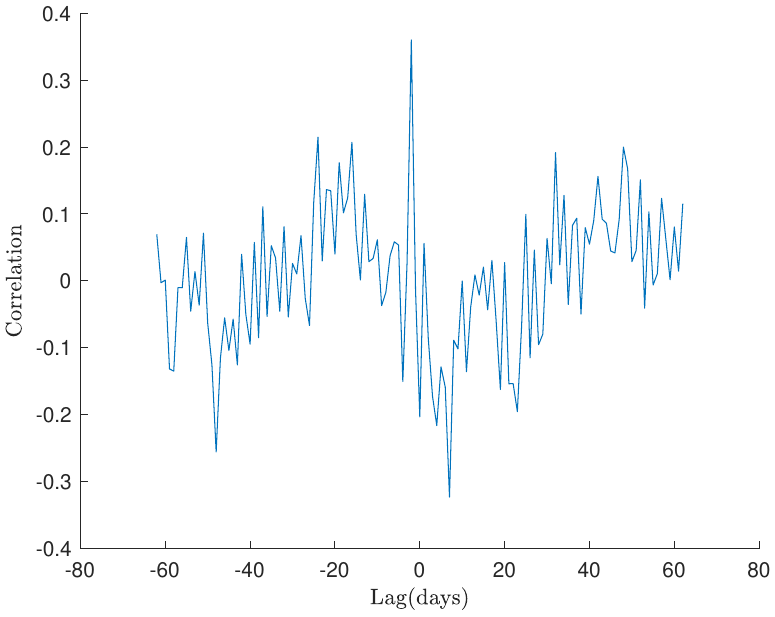}
    \caption{The data and correlation structure. Daily returns from the TASI data (Left), the correlation between the absolute returns $|R_{t_i}|= |\log(Y_{t_i} ) - \log(Y_{t_{i-1}} ) |$ and the lagged returns $R_{t_{i-j}}$ (Right).}
    \label{fig:dat_corr}
\end{figure}

\begin{figure}[h!]
    \centering
    \includegraphics[scale=0.55]{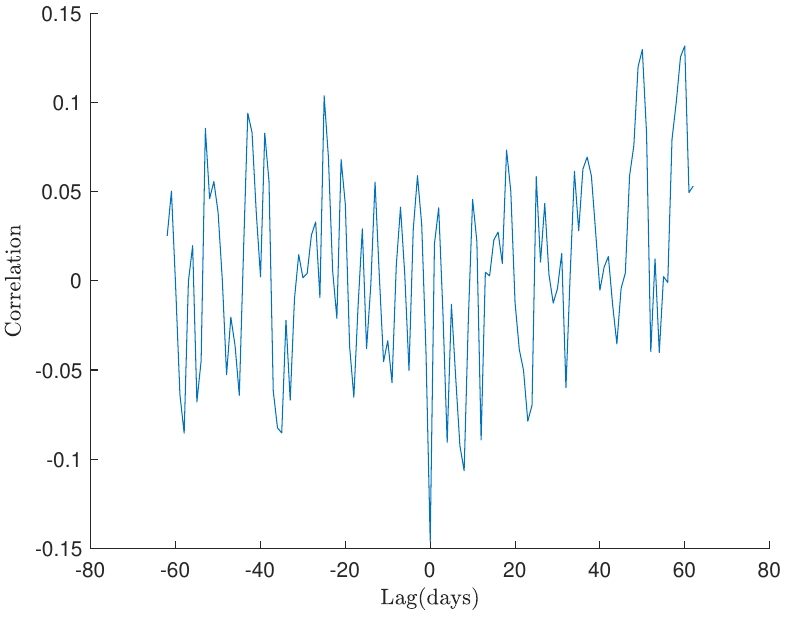}
    \caption{Predictive correlation between absolute returns and lagged returns from the TASI data.}
    \label{fig:corr_pred}
\end{figure}

\begin{table}[h!]
    \centering
    \caption{Summary statistics from the return of the TASI data (249, daily returns). The predictive estimate is a multilevel approximation of the predictive expectation.}
    \begin{tabular}{c|c|c|c|c}
    Return & Mean & Variance & Skewness & Kurtosis \\
\hline
    $Y_t$ & -6.75e-04 & 1.15e-04 & -0.60 & 4.58 \\
    Predictive Estimate & 5.42e-04 & 2.27e-05 & -0.38& 5.39\\
    \end{tabular}
    \label{tab:3}
\end{table}

\begin{table}[h!]
    \centering
    \caption{Parameters mean values obtained via MLPMMH for TASI data (with 249 values of daily returns.)}
    \label{tab:param_est}
    \begin{tabular}{c|c|c|c|c|c|c|c}
    Parameter & $V_0$ & $H$ & $\rho$ & $\kappa$ & $\lambda$ & $\nu$ & r \\
    \hline
     Estimated Values & 0.16447 & 0.4557  & -0.43407 & 0.0017 & 11.0022 & 0.0445 &-5.7093e-04 
    \end{tabular}
\end{table}

\section*{Acknowledgements}
All authors were supported by KAUST baseline funding.

\appendix

\section{Proof of Proposition \ref{prop:main_res}}\label{app:1}

Our proof is given by way of two technical Lemmata. The first of which is a simple discretization result and the second and analogue of \cite[Lemma A.2.]{jasra}. These two results together can be combined with \cite[Proposition A.2.]{jasra} to conclude the result. Throughout the appendix $C$ is a generic finite constant that is independent of $l$. Any dependencies on the model parameters will be clear from the result statements.

In the following Lemma, the expectation is taken w.r.t.~the underlying law of the process $\{V_t\}_{t\in[0,T]}$ where the discretized process $\{V_t^l\}_{t\in\{\Delta_l,2\Delta_l,\dots,T\}}$ shares the same Brownian motion as $\{V_t\}_{t\in[0,T]}$.
\begin{lem}\label{lem:lem1}
Assume Assumptions \ref{ass:1}-\ref{ass:2}. Then for any $(y,y',r)\in\mathsf{Y}^2\times[1,2]$ there exists a $C<\infty$ such that for any $(l,t)\in\mathbb{N}\times\{1,\dots,T\}$
$$
\sup_{\theta\in\Theta}\mathbb{E}\Bigg[\Bigg|\overline{g}\Big(y;y',\Delta_l\sum_{k=0}^{\Delta_l^{-1}-1} \overline{f}(V_{t-1+k\Delta_l}^l),\sum_{k=0}^{\Delta_l^{-1}-1} \overline{h}(V_{t-1+k\Delta_l}^l)\overline{W}_{t-1+(k+1)\Delta_l}\Big)-
$$
$$
\overline{g}\Big(y;y',\int_{t-1}^t \overline{f}(V_s)ds,\int_{t-1}^t \overline{h}(V_s)dW_s\Big)
\Bigg|^r\Bigg]^{1/r} \leq C\Delta_l^{\beta}.
$$
\end{lem}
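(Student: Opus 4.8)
The plan is to reduce the statement, using the Lipschitz regularity of $\overline{g}$ in its last two arguments, to two separate $L^r$-estimates --- one for the error in approximating a Lebesgue integral by a left Riemann sum, one for the error in approximating an It\^o integral by its non-anticipative Euler sum --- and then to handle each by combining the strong-error bound of Assumption \ref{ass:2} with a modulus-of-continuity estimate for the exact Volterra process $\{V_t\}$. Concretely, write $A:=\int_{t-1}^t\overline{f}(V_s)\,ds$, $A_l:=\Delta_l\sum_{k=0}^{\Delta_l^{-1}-1}\overline{f}(V^l_{t-1+k\Delta_l})$, $B:=\int_{t-1}^t\overline{h}(V_s)\,dW_s$ and $B_l:=\sum_{k=0}^{\Delta_l^{-1}-1}\overline{h}(V^l_{t-1+k\Delta_l})\overline{W}_{t-1+(k+1)\Delta_l}$. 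By Assumption \ref{ass:1}.2 the integrand of the left-hand side is bounded by $C\big(|A_l-A|+|B_l-B|\big)$, so by Minkowski's inequality it suffices to show $\sup_\theta\mathbb{E}[|A_l-A|^r]^{1/r}\le C\Delta_l^\beta$ and $\sup_\theta\mathbb{E}[|B_l-B|^r]^{1/r}\le C\Delta_l^\beta$; here and below I also use that $\overline{h}$ is Lipschitz and that $V_t,V^l_t$ have $r$-th moments bounded uniformly in $\theta$ and $l$, which follows from the standing hypotheses of \cite{richard} together with Assumption \ref{ass:2}.

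For the first estimate, decompose $A=\sum_{k=0}^{\Delta_l^{-1}-1}\int_{t-1+k\Delta_l}^{t-1+(k+1)\Delta_l}\overline{f}(V_s)\,ds$ and subtract termwise, giving $A_l-A=\sum_{k}\int_{t-1+k\Delta_l}^{t-1+(k+1)\Delta_l}\big(\overline{f}(V^l_{t-1+k\Delta_l})-\overline{f}(V_s)\big)\,ds$. Using the Lipschitz property of $\overline{f}$ (Assumption \ref{ass:1}.1) and the triangle inequality $|V^l_{t-1+k\Delta_l}-V_s|\le|V^l_{t-1+k\Delta_l}-V_{t-1+k\Delta_l}|+|V_{t-1+k\Delta_l}-V_s|$, the first term is bounded by $C\Delta_l^\beta$ in $L^r$ by Assumption \ref{ass:2}, uniformly over level-$l$ grid points, and the second by $C\Delta_l^\beta$ in $L^r$ via the modulus-of-continuity estimate $\mathbb{E}[|V_u-V_s|^r]^{1/r}\le C|u-s|^{\beta}$ for $|u-s|\le\Delta_l$, which for the kernel $K(t)=Ct^H$ holds with $\beta=H+\tfrac12$ (see \cite{richard}). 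Minkowski's inequality over the $\Delta_l^{-1}$ subintervals of length $\Delta_l$ then gives $\mathbb{E}[|A_l-A|^r]^{1/r}\le C\Delta_l^\beta$.

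For the second estimate, note that $\overline{W}_{t-1+(k+1)\Delta_l}=\int_{t-1+k\Delta_l}^{t-1+(k+1)\Delta_l}dW_s$, so $B_l-B=\int_{t-1}^t\big(\overline{h}(V^l_{\underline{s}})-\overline{h}(V_s)\big)\,dW_s$, where $\underline{s}$ denotes the largest level-$l$ grid point not exceeding $s$. By the Burkholder--Davis--Gundy inequality, and then Jensen's inequality applied to the concave map $x\mapsto x^{r/2}$ (valid as $r\in[1,2]$) to bring the expectation inside the time integral,
$$
\mathbb{E}\big[|B_l-B|^r\big]^{1/r}\ \le\ C\Big(\int_{t-1}^t\mathbb{E}\big[\,|\overline{h}(V^l_{\underline{s}})-\overline{h}(V_s)|^2\,\big]\,ds\Big)^{1/2}.
$$
The integrand is bounded by $C\Delta_l^{2\beta}$ via the Lipschitz property of $\overline{h}$ together with Assumption \ref{ass:2} at $r=2$ and the $L^2$-modulus of continuity used above, so $\mathbb{E}[|B_l-B|^r]^{1/r}\le C\Delta_l^\beta$. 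Combining the two estimates and taking $\sup_\theta$ gives the claim.

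The main obstacle is not the discretization bookkeeping but the modulus-of-continuity estimate for the \emph{exact} Volterra process, uniformly in $\theta$, together with the fact that its H\"older exponent is (at least) $\beta$; this is the ingredient that does not follow directly from the stated assumptions and must be imported from \cite{richard}. A secondary technical point is that, since $r\in[1,2]$, one cannot pass the expectation inside the quadratic-variation integral by Minkowski and must instead use the concavity of $x\mapsto x^{r/2}$, as above.
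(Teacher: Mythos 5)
Your argument is correct and uses the same toolkit as the paper (Lipschitz reduction via Assumption \ref{ass:1}.2, Minkowski for the drift-type term, Burkholder--Davis--Gundy for the martingale term), but the decomposition is genuinely different. The paper inserts an intermediate quantity in which the Riemann and Euler sums are evaluated at the \emph{exact} process at the grid points, splitting the error into $T_1$ (discretized sums with $V^l$ versus the same sums with $V$) and $T_2$ (those sums with $V$ versus the true integrals); the virtue of this splitting is that $T_1$ --- the only term the paper proves in detail --- is controlled by Assumption \ref{ass:2} alone, since both arguments live on the same grid and only the values $V^l$ versus $V$ differ, with no time-regularity of $V$ needed. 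You instead compare the discretized quantities directly with the true integrals, which forces the triangle inequality $|V^l_{\underline{s}}-V_s|\le|V^l_{\underline{s}}-V_{\underline{s}}|+|V_{\underline{s}}-V_s|$ and hence an $L^r$ modulus-of-continuity estimate for the exact Volterra process (with exponent $\beta=H+\tfrac12$, imported from \cite{richard}) in \emph{both} the drift and martingale parts; you are right to flag this as an input not contained in the stated assumptions. Note, however, that the paper's own proof does not escape this: its $T_2$ term is exactly the time-discretization error of the exact process and requires the same regularity, but is dispatched only by reference to ``martingale-remainder type arguments'' as in \cite{jasra_cont}, so your version is arguably more transparent about where the extra ingredient enters, at the price of needing it twice. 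Your treatment of the stochastic term (writing $B_l-B$ as a single It\^o integral, then BDG plus Jensen with the concavity of $x\mapsto x^{r/2}$) is equivalent to the paper's discrete-martingale BDG-plus-Minkowski bound; and both you and the paper tacitly use that $\overline{h}$ is Lipschitz, which Assumption \ref{ass:1}.1 as written only asserts for $(\overline{f},\overline{g})$, so that is a shared, not a new, gap.
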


\begin{proof}
We have the simple decomposition
$$
\mathbb{E}\Bigg[\Bigg|\overline{g}\Big(y;y',\Delta_l\sum_{k=0}^{\Delta_l^{-1}-1} \overline{f}(V_{t-1+k\Delta_l}^l),\sum_{k=0}^{\Delta_l^{-1}-1} \overline{h}(V_{t-1+k\Delta_l}^l)\overline{W}_{t-1+(k+1)\Delta_l}\Big)-
$$
$$
\overline{g}\Big(y;y',\int_{t-1}^t \overline{f}(V_s)ds,\int_{t-1}^t \overline{h}(V_s)dW_s\Big)
\Bigg|^r\Bigg]^{1/r} \leq T_1 + T_2
$$
where
\begin{eqnarray*}
T_1 & = & \mathbb{E}\Bigg[\Bigg|\overline{g}\Big(y;y',\Delta_l\sum_{k=0}^{\Delta_l^{-1}-1} \overline{f}(V_{t-1+k\Delta_l}^l),\sum_{k=0}^{\Delta_l^{-1}-1} \overline{h}(V_{t-1+k\Delta_l}^l)\overline{W}_{t-1+(k+1)\Delta_l}\Big)-\\ & &
\overline{g}\Big(y;y',\Delta_l\sum_{k=0}^{\Delta_l^{-1}-1} \overline{f}(V_{t-1+k\Delta_l}),\sum_{k=0}^{\Delta_l^{-1}-1} \overline{h}(V_{t-1+k\Delta_l})\overline{W}_{t-1+(k+1)\Delta_l}\Big)\Bigg|^r\Bigg]^{1/r}\\
T_2 & = & \mathbb{E}\Bigg[\Bigg|\overline{g}\Big(y;y',\Delta_l\sum_{k=0}^{\Delta_l^{-1}-1} \overline{f}(V_{t-1+k\Delta_l}),\sum_{k=0}^{\Delta_l^{-1}-1} \overline{h}(V_{t-1+k\Delta_l})\overline{W}_{t-1+(k+1)\Delta_l}\Big) - \\
& & \overline{g}\Big(y;y',\int_{t-1}^t \overline{f}(V_s)ds,\int_{t-1}^t \overline{h}(V_s)dW_s\Big)
\Bigg|^r\Bigg]^{1/r}.
\end{eqnarray*}
To conclude the proof, we simply need to bound $T_1$ and $T_2$. As the proofs are very similar, we shall only consider the case for $T_1$ (i.e.~very simple Martingale-reminder type arguments as in \cite{jasra_cont} for instance).

We begin by applying (A\ref{ass:1})-2.~followed by the Minkowski inequality to yield the upper-bound $T_1\leq C(T_3+T_4)$ where
\begin{eqnarray*}
T_3 & = & \mathbb{E}\Bigg[\Bigg|\Delta_l\sum_{k=0}^{\Delta_l^{-1}-1} \overline{f}(V_{t-1+k\Delta_l}^l) - 
\Delta_l\sum_{k=0}^{\Delta_l^{-1}-1} \overline{f}(V_{t-1+k\Delta_l})\Bigg|^r\Bigg]^{1/r} \\
T_4 & = & \mathbb{E}\Bigg[\Bigg|\sum_{k=0}^{\Delta_l^{-1}-1} \{\overline{h}(V_{t-1+k\Delta_l}^l)-\overline{h}(V_{t-1+k\Delta_l})\}\overline{W}_{t-1+(k+1)\Delta_l}\Bigg|^r\Bigg]^{1/r}.
\end{eqnarray*}
For $T_3$ using the Minkowski inequality multiple times, one has
$$
T_3 \leq \Delta_l\sum_{k=0}^{\Delta_l^{-1}-1}\mathbb{E}\Bigg[\Bigg|\overline{f}(V_{t-1+k\Delta_l}^l) -
\overline{f}(V_{t-1+k\Delta_l})\Bigg|^r\Bigg]^{1/r}.
$$
Then using (A\ref{ass:1}) 1.~followed by (A\ref{ass:2}) gives the upper-bound
$$
T_3 \leq C\Delta_l^{\beta}.
$$
For $T_4$ one can show that (w.r.t.~an appropriate filtration) that 
$$\sum_{k=0}^{\Delta_l^{-1}-1} \{\overline{h}(V_{t-1+k\Delta_l}^l)-\overline{h}(V_{t-1+k\Delta_l})\}\overline{W}_{t-1+(k+1)\Delta_l}$$ 
is a martingale and hence, by application of the Burkholder-Gundy-Davis inequality, we have the upper-bound
$$
T_4 \leq C\Delta_l^{1/2}
\mathbb{E}\Bigg[\Bigg|\sum_{k=0}^{\Delta_l^{-1}-1} \{\overline{h}(V_{t-1+k\Delta_l}^l)-\overline{h}(V_{t-1+k\Delta_l})\}^2\Bigg|^{\tfrac{r}{2}}\Bigg]^{1/r}.
$$
Then by using the Minkowski inequality multiple times along with (A\ref{ass:1}) 1.~gives
$$
T_4 \leq  C\Delta_l^{1/2}\Bigg(
\sum_{k=0}^{\Delta_l^{-1}-1}\mathbb{E}\Bigg[\Bigg|V_{t-1+k\Delta_l}^l)-V_{t-1+k\Delta_l}\Bigg|^{r}\Bigg]^{2/r}
\Bigg)^{1/2}.
$$
Application of (A\ref{ass:2}) gives the upper-bound
$$
T_4 \leq C\Delta_l^{\beta}
$$
and from here the proof can be concluded.
\end{proof}

\begin{lem}\label{lem:lem2}
Assume Assumptions \ref{ass:1}-\ref{ass:2}. Then for any $(r,\varphi)\in\{1,2\}\times\textrm{\emph{Lip}}(\Theta\times\mathbb{R}^T)\cap\mathcal{B}_b(\Theta\times\mathbb{R}^T)$ there exists a $C<\infty$ such that for any $l\in\mathbb{N}$
$$
\mathbb{E}_{\tilde{\pi}^{l,l-1}}[|\varphi(\theta,V_{1:T}^l)H_{1,l}(\overline{W}_{\Delta_l:T}^l,\overline{W}_{\Delta_{l-1}:T}^{l-1})-\varphi(\theta,V_{1:T}^{l-1})H_{2,l}(\overline{W}_{\Delta_l:T}^l,\overline{W}_{\Delta_{l-1}:T}^{l-1})|^r]^{3-r} \leq C\Delta_l^{2\beta}.
$$
\end{lem}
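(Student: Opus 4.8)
The plan is to prove Lemma \ref{lem:lem2} by reducing it, via a telescoping-type decomposition, to two ingredients: (i) the $L_r$-discretization bound from Lemma \ref{lem:lem1}, and (ii) boundedness (above and below) of the weights $\kappa_{t,l}$ and hence of $H_{1,l},H_{2,l}$, together with the Lipschitz property of $\varphi$. First I would write
$$
\varphi(\theta,V_{1:T}^l)H_{1,l} - \varphi(\theta,V_{1:T}^{l-1})H_{2,l}
= \bigl(\varphi(\theta,V_{1:T}^l)-\varphi(\theta,V_{1:T}^{l-1})\bigr)H_{1,l}
+ \varphi(\theta,V_{1:T}^{l-1})\bigl(H_{1,l}-H_{2,l}\bigr),
$$
so that by Minkowski it suffices to bound the $L_r(\tilde\pi^{l,l-1})$-norm of each of the two terms by $C\Delta_l^\beta$ (then raising to the power $3-r\in\{1,2\}$ and squaring out the Minkowski sum only changes the constant). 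For the first term, since $H_{1,l}\leq 1$ by construction and $\varphi\in\mathrm{Lip}(\Theta\times\mathbb R^T)$ with the $L_1$ norm, I would bound it by $C\sum_{t=1}^T\mathbb E_{\tilde\pi^{l,l-1}}[|V_t^l-V_t^{l-1}|^r]^{1/r}$; inserting $\pm V_t$ and using the triangle inequality reduces this to the strong error bound (A\ref{ass:2}) at levels $l$ and $l-1$, giving $C\Delta_l^\beta$ (absorbing the factor $2^\beta$ from $\Delta_{l-1}=2\Delta_l$ into $C$).

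The more delicate term is $\varphi(\theta,V_{1:T}^{l-1})(H_{1,l}-H_{2,l})$. Here I would use that $|\varphi|$ is bounded (it lies in $\mathcal B_b$) so it is enough to control $\mathbb E_{\tilde\pi^{l,l-1}}[|H_{1,l}-H_{2,l}|^r]^{1/r}$. Writing both $H_{1,l}$ and $H_{2,l}$ as products over $t=1,\dots,T$ of terms $\kappa_{t,\cdot}/\max\{\kappa_{t,l},\kappa_{t,l-1}\}$, I would peel the product one factor at a time via the standard telescoping identity $\prod a_t - \prod b_t = \sum_t (a_t-b_t)\prod_{s<t}a_s\prod_{s>t}b_s$, using that each factor is in $[0,1]$ (so all the trailing products are bounded by $1$). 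Each difference of factors is
$$
\frac{\kappa_{t,l}(\overline w^l_{\Delta_l:t})-\kappa_{t,l-1}(\overline w^{l-1}_{\Delta_{l-1}:t})}{\max\{\kappa_{t,l}(\overline w^l_{\Delta_l:t}),\kappa_{t,l-1}(\overline w^{l-1}_{\Delta_{l-1}:t})\}},
$$
and the denominator is bounded below by $\underline C>0$ thanks to (A\ref{ass:1})-3. So the whole thing is controlled by $(\underline C)^{-1}\sum_{t=1}^T \mathbb E_{\tilde\pi^{l,l-1}}[|\kappa_{t,l}(\overline w^l_{\Delta_l:t})-\kappa_{t,l-1}(\overline w^{l-1}_{\Delta_{l-1}:t})|^r]^{1/r}$. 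Each summand is exactly what Lemma \ref{lem:lem1} controls — after accounting for the change of measure from $\tilde\pi^{l,l-1}$ to the law of the driving process. Since $\tilde\pi^{l,l-1}$ has a density with respect to that law that is a normalized product of $\max\{\kappa_{t,l},\kappa_{t,l-1}\}\in[\underline C,\overline C]$ terms, this Radon--Nikodym derivative is bounded above by $(\overline C/\underline C)^T$, a finite constant; hence the $L_r(\tilde\pi^{l,l-1})$ expectations are dominated by $C$ times the corresponding $L_r$ expectations under the process law, which Lemma \ref{lem:lem1} bounds by $C\Delta_l^\beta$ (using $\Delta_{l-1}=2\Delta_l$ to compare levels $l$ and $l-1$ against the common continuous-time quantity and absorbing constants).

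The main obstacle I anticipate is being careful about the change of measure and about which version of $V$ appears where: the coupled invariant measure $\tilde\pi^{l,l-1}$ couples the level-$l$ and level-$(l-1)$ Brownian increments synchronously (the coarse increment is the sum of two fine increments), so that $V_{1:T}^l$ and $V_{1:T}^{l-1}$ are built from the \emph{same} underlying Brownian path and Lemma \ref{lem:lem1}'s hypothesis — that the discretized and continuous processes share a Brownian motion — is legitimately applicable to each marginal. One must also check that Lemma \ref{lem:lem1} as stated (comparing the discretized weight at a single level to the continuous-time weight) combines with the triangle inequality across levels $l$ and $l-1$ to yield the level-difference bound; this is routine since $\kappa_{t,l}-\kappa_{t,l-1}=(\kappa_{t,l}-\kappa_{t,\infty})+(\kappa_{t,\infty}-\kappa_{t,l-1})$ with $\kappa_{t,\infty}$ the continuous-time functional, and $\Delta_{l-1}^\beta=2^\beta\Delta_l^\beta$. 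Once these bookkeeping points are settled, raising the resulting $L_r$ bound to the power $3-r$ gives $\Delta_l^{(3-r)\beta}$; for $r=1$ this is $\Delta_l^{2\beta}$ and for $r=2$ the square of the $L_2$ bound is again $\Delta_l^{2\beta}$, which is precisely the claimed estimate.
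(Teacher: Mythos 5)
Your proof is correct and takes essentially the same route as the paper's: the same split into a $\varphi$-difference term (handled by the Lipschitz property and Assumption \ref{ass:2} under the synchronous coupling) and a likelihood-difference term (handled via Lemma \ref{lem:lem1} through the continuous-time weight), with Assumption \ref{ass:1}-3 controlling both the $\max$-denominators and the bounded change of measure from $\tilde{\pi}^{l,l-1}$ to $\pi(\theta)d\theta\otimes\tilde{Q}^{l,l-1}$. The only cosmetic differences are that you telescope the products over $t$ directly for general $T$ where the paper treats $T=1$ and invokes induction, and that the Radon--Nikodym bound you state for the second term is also (implicitly) needed for the first.
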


\begin{proof}
We give the proof for $r=1=T$ only. Extension to the general case can be conducted via induction on $T$ and to the case $r=2$ is a similar proof to the case $r=1$.

Then we have, via (A\ref{ass:1}) 3.~that
$$
\mathbb{E}_{\tilde{\pi}^{l,l-1}}[|\varphi(\theta,V_{1}^l)H_{1,l}(\overline{W}_{\Delta_l:1}^l,\overline{W}_{\Delta_{l-1}:1}^{l-1})-\varphi(\theta,V_{1}^{l-1})H_{2,l}(\overline{W}_{\Delta_l:1}^l,\overline{W}_{\Delta_{l-1}:1}^{l-1})|]
\leq 
$$
$$
C\int_{\Theta}\mathbb{E}_{\tilde{Q}^{l,l-1}}\Bigg[\Bigg|\varphi(\theta,V_{1}^l)\overline{g}\Big(y_1;y_0,\Delta_l\sum_{k=0}^{\Delta_l^{-1}-1} \overline{f}(V_{k\Delta_l}^l),\sum_{k=0}^{\Delta_l^{-1}-1} \overline{h}(V_{k\Delta_l}^l)\overline{W}_{(k+1)\Delta_l}\Big)
$$
$$
-\varphi(\theta,V_{1}^{l-1})
\overline{g}\Big(y_1;y_0,\Delta_{l-1}\sum_{k=0}^{\Delta_{l-1}^{-1}-1} \overline{f}(V_{k\Delta_{l-1}}^{l-1}),\sum_{k=0}^{\Delta_{l-1}^{-1}-1} \overline{h}(V_{k\Delta_{l-1}}^{l-1})\overline{W}_{(k+1)\Delta_{l-1}}\Big)
\Bigg|\Bigg]\pi(\theta)d\theta.
$$
The R.H.S.~is upper-bounded by $C(T_1+T_2)$ where
\begin{eqnarray*}
T_1 & = & \int_{\Theta}\mathbb{E}_{\tilde{Q}^{l,l-1}}\Big[\Big|\varphi(\theta,V_{1}^l)-\varphi(\theta,V_{1}^{l-1})\Big|\Big]\pi(\theta)d\theta\\
T_2 & = & \int_{\Theta}\mathbb{E}_{\tilde{Q}^{l,l-1}}\Bigg[\Bigg|\overline{g}\Big(y_1;y_0,\Delta_l\sum_{k=0}^{\Delta_l^{-1}-1} \overline{f}(V_{k\Delta_l}^l),\sum_{k=0}^{\Delta_l^{-1}-1} \overline{h}(V_{k\Delta_l}^l)\overline{W}_{(k+1)\Delta_l}\Big)
-\\ & & \overline{g}\Big(y_1;y_0,\Delta_{l-1}\sum_{k=0}^{\Delta_{l-1}^{-1}-1} \overline{f}(V_{k\Delta_{l-1}}^{l-1}),\sum_{k=0}^{\Delta_{l-1}^{-1}-1} \overline{h}(V_{k\Delta_{l-1}}^{l-1})\overline{W}_{(k+1)\Delta_{l-1}}\Big)\Bigg|\Bigg]\pi(\theta)d\theta.
\end{eqnarray*}
The term $T_1$ can be dealt with via $\varphi\in\textrm{Lip}(\Theta\times\mathbb{R}^T)$ and (A\ref{ass:2}) and the $T_2$ can be handled by using Lemma \ref{lem:lem1} which concludes the proof.
\end{proof}

\section{Proof of Proposition \ref{prop:main_res1}}\label{app:2}

\begin{proof}
The proof is similar to \cite[Proposition A.1.]{jasra_cont} and is modified marginally to the problem in this article.
By using (e.g.) \cite[Lemma C.5.]{mlpf} we have the decomposition
$$
\left\{\widehat{\mathbb{E}_{\pi^l}[\varphi(\theta,V_{1:T})] - \mathbb{E}_{\pi^{l-1}}[\varphi(\theta,V_{1:T})]}\right\}^{M_l}-
\{\mathbb{E}_{\pi^l}[\varphi(\theta,V_{1:T})] - \mathbb{E}_{\pi^{l-1}}[\varphi(\theta,V_{1:T})]\} = 
$$
$$
\frac{\{a^{M_l}-b^{M_l}-(a-b)\}}{A^{M_l}} - \frac{b^{M_l}\{A^{M_l}-B^{M_l}-(A-B)\}}{A^{M_l}B^{M_l}} + \frac{1}{A^{M_l}A}[A-A^{M_l}][a-b] - \frac{1}{A^{M_l}B^{M_l}}[b^{M_l}-b][A-B]  
$$
$$
+\frac{b}{AB^{M_l}B}[B^{M_l}-B][A-B] + 
\frac{b}{AA^{M_l}B^{M_l}}[A^{M_l}-A][A-B]
$$
where
\begin{eqnarray*}
a^{M_l} & = & \frac{1}{M+1}\sum_{k=0}^{M_l}\varphi(\theta_k^l,v_{k,1:T}^l(l))H_{1,l}(\theta_k^l,\overline{w}_{k,\Delta_l:T}^l(l),\overline{w}_{k,\Delta_{l-1}:T}^{l-1}(l)) \\
A^{M_l} & = & \frac{1}{M+1}\sum_{k=0}^{M_l}H_{1,l}(\theta_k^l,\overline{w}_{k,\Delta_l:T}^l(l),\overline{w}_{k,\Delta_{l-1}:T}^{l-1}(l)) \\
b^{M_l} & = & \frac{1}{M+1}\sum_{k=0}^{M_l}\varphi(\theta_k^l,v_{k,1:T}^{l-1}(l))H_{2,l}(\theta_k^l,\overline{w}_{k,\Delta_l:T}^l(l),\overline{w}_{k,\Delta_{l-1}:T}^{l-1}(l)) \\
B^{M_l} & = & \frac{1}{M+1}\sum_{k=0}^{M_l}H_{2,l}(\theta_k^l,\overline{w}_{k,\Delta_l:T}^l(l),\overline{w}_{k,\Delta_{l-1}:T}^{l-1}(l))
\end{eqnarray*}
and 
\begin{eqnarray*}
a & = & \mathbb{E}_{\tilde{\pi}^{l,l-1}}[\varphi(\theta,V_{1:T}^l)H_{1,l}(\overline{W}_{\Delta_l:T}^l,\overline{W}_{\Delta_{l-1}:T}^{l-1})] \\
A & = & \mathbb{E}_{\tilde{\pi}^{l,l-1}}[H_{1,l}(\overline{W}_{\Delta_l:T}^l,\overline{W}_{\Delta_{l-1}:T}^{l-1})]\\
b & = & \mathbb{E}_{\tilde{\pi}^{l,l-1}}[\varphi(\theta,V_{1:T}^{l-1})H_{2,l}(\overline{W}_{\Delta_l:T}^l,\overline{W}_{\Delta_{l-1}:T}^{l-1})] \\
B &=& \mathbb{E}_{\tilde{\pi}^{l,l-1}}[H_{2,l}(\overline{W}_{\Delta_l:T}^l,\overline{W}_{\Delta_{l-1}:T}^{l-1})].
\end{eqnarray*}
As the chain is started in stationarity, one has
$$
\left\{\widehat{\mathbb{E}_{\pi^l}[\varphi(\theta,V_{1:T})] - \mathbb{E}_{\pi^{l-1}}[\varphi(\theta,V_{1:T})]}\right\}^{M_l}-
\{\mathbb{E}_{\pi^l}[\varphi(\theta,V_{1:T})] - \mathbb{E}_{\pi^{l-1}}[\varphi(\theta,V_{1:T})]\} = 
$$
$$
\mathbb{E}\left[\left(\frac{1}{A^{M_l}}-\frac{1}{A}\right)\{a^{M_l}-b^{M_l}-(a-b)\}\right] - 
\mathbb{E}\left[\left(
\frac{b^{M_l}}{A^{M_l}B^{M_l}} - \frac{b}{AB}
\right)\{A^{M_l}-B^{M_l}-(A-B)\}\right] + 
$$
$$
[a-b]\mathbb{E}\left[\left(\frac{1}{A^{M_l}A}-\frac{1}{A^2}\right)[A-A^{M_l}]\right] - 
[A-B]\mathbb{E}\left[\left(\frac{1}{A^{M_l}B^{M_l}} -
\frac{1}{AB}
\right)[b^{M_l}-b]\right] + 
$$
$$
[A-B]\mathbb{E}\left[\left(\frac{b}{AB^{M_l}B}
- \frac{b}{AB^2}
\right)[B^{M_l}-B]\right] + 
[A-B]\mathbb{E}\left[\left(
\frac{b}{AA^{M_l}B^{M_l}}-
\frac{b}{A^2B}
\right)[A^{M_l}-A]\right].
$$
The control of many of these terms is fairly repetitive so,  we consider only the first and third terms on the R.H.S.~of the displayed equation. For the first term, one can apply Cauchy-Schwarz and \cite[Proposition A.1.]{jasra} to yield
$$
\mathbb{E}\left[\left(\frac{1}{A^{M_l}}-\frac{1}{A}\right)\{a^{M_l}-b^{M_l}-(a-b)\}\right] \leq 
\mathbb{E}\left[\left(\frac{1}{A^{M_l}}-\frac{1}{A}\right)^2\right]^{1/2}\frac{C\Delta_l^{\beta}}{M_l^{1/2}}
$$
Using standard results for the convergence of Markov chains (e.g.~\cite[Proposition A.1.]{jasra}) then we arrive
at
$$
\mathbb{E}\left[\left(\frac{1}{A^{M_l}}-\frac{1}{A}\right)\{a^{M_l}-b^{M_l}-(a-b)\}\right] \leq \frac{C\Delta_l^{\beta}}{M_l}.
$$
For the third term on the R.H.S.~application of Lemma \ref{lem:lem2} along with Cauchy Schwarz gives
$$
[a-b]\mathbb{E}\left[\left(\frac{1}{A^{M_l}A}-\frac{1}{A^2}\right)[A-A^{M_l}]\right] \leq C\Delta_l^{\beta}
\mathbb{E}\left[\left(\frac{1}{A^{M_l}A}-\frac{1}{A^2}\right)^2\right]^{1/2}
\mathbb{E}\left[[A-A^{M_l}]^2\right] ^{1/2}
$$
again using standard results for the convergence of Markov chains
$$
[a-b]\mathbb{E}\left[\left(\frac{1}{A^{M_l}A}-\frac{1}{A^2}\right)[A-A^{M_l}]\right] \leq \frac{C\Delta_l^{\beta}}{M_l}
$$
and this concludes the proof.
\end{proof}

\end{document}